\documentclass[%
reprint,
superscriptaddress,
amsmath,amssymb,
aps,
prx,
]{revtex4-2}

\usepackage{makecell}
\usepackage{comment}
\usepackage[utf8]{inputenc}
\usepackage[T1]{fontenc}
\usepackage{graphicx}
\usepackage{dcolumn}
\usepackage{appendix}
\usepackage{amsmath}
\usepackage{bm}
\usepackage{bbm} 
\usepackage{amsthm}

\usepackage{braket}
\usepackage{booktabs}
\usepackage{bbold}
\usepackage{mathtools}
\usepackage{mathtools}
\usepackage{physics}
\usepackage[colorlinks]{hyperref}
\usepackage{xcolor}
\newtheorem{prop}{Proposition}
\newtheorem{conj}{Conjecture}
\newtheorem{Theorem}{Theorem}

\newcommand{\raf}[1]{{\color{green}#1}}

\begin{document}

\preprint{APS/123-QED}

\title{Bell inequalities tailored to optimal global randomness certification}

\author{Ignacio Perito}
\email{npnacho@gmail.com}
\affiliation{ICFO-Institut de Ciencies Fotoniques, The Barcelona Institute of
Science and Technology, 08860 Castelldefels (Barcelona), Spain}

\author{Raffaele D'Avino}

\affiliation{ICFO-Institut de Ciencies Fotoniques, The Barcelona Institute of
Science and Technology, 08860 Castelldefels (Barcelona), Spain}

\author{Michał Jung}
\affiliation{Center for Quantum-Enabled Computing, Center for Theoretical Physics,
Polish Academy of Sciences, al. Lotników 32/46, 02-668 Warsaw, Poland}

\author{Piotr Mironowicz}

\affiliation{Center for Theoretical Physics, Polish Academy of Sciences, al. Lotnik\'{o}w 32/46, 02-668 Warsaw, Poland}
\affiliation{Department of Algorithms and System Modeling, Faculty of Electronics, Telecommunications and Informatics, Gda\'{n}sk University of Technology, Poland}

\author{Antonio Ac\'in}

\affiliation{ICFO-Institut de Ciencies Fotoniques, The Barcelona Institute of
Science and Technology, 08860 Castelldefels (Barcelona), Spain}
\affiliation{ICREA-Institucio Catalana de Recerca i Estudis Avançats, Lluis Companys 23, 08010 Barcelona, Spain}

\author{Remigiusz Augusiak}
\affiliation{Center for Quantum-Enabled Computing, Center for Theoretical Physics,
Polish Academy of Sciences, al. Lotników 32/46, 02-668 Warsaw, Poland}


\begin{abstract}
    We present two novel families of bipartite Bell inequalities designed to achieve optimal global randomness certification for an arbitrary number of outputs $d$. We first use symmetry arguments to argue that their maximal quantum violations certify $2\log d$ random bits. For the first family, we construct a quantum realization using $d\times d$ maximally entangled states which provides a quantum violation that we conjecture to be optimal for any $d$. It is then numerically shown that the obtained quantum violation certifies optimal global randomness, up to numerical precision, for $d=3,4$.  For the second family, we provide the optimal quantum violation and its quantum realization for any $d$, again using $d\times d$ maximally entangled states and projective measurements over at least two unbiased bases on one of the parties. We self-test this realization for $d=3$, which implies the optimal certification of two fully random trits. 
\end{abstract}

\maketitle

\section{Introduction}

Randomness is not only a quantity of fundamental scientific interest but also a key resource in many security related tasks. For this reason, one of the most exploited features of Bell nonlocality \cite{bell1964} is its capacity to certify randomness in a device-independent manner. While the inherent randomness of a process is impossible to prove mathematically \cite{kolmogorovbook}, the observation of Bell nonlocal correlations between two devices, or parties, Alice and Bob, certifies the presence of genuine randomness, assuming only the validity of quantum theory. The measured outcomes are certified to be unpredictable, even by any eavesdropper, Eve, correlated with the Bell setup~\cite{AMNature2016}. A paradigmatic example of this situation is given by the well known Clauser-Horne-Shimony-Holt (CHSH) functional~\cite{chsh}, whose maximal quantum violation, attained by a two-qubit maximally entangled state, ensures that each of the local outputs is a uniform random bit. CHSH achieves optimal \emph{local randomness} certification, where at least one of the outcomes of a Bell test is fully random (in fact, for the CHSH maximal quantum violation, all of them are). Bell tests with binary outcomes attaining optimal local randomness certification for arbitrary two-qubit entangled states were later derived in~\cite{AMPPRL2012}. 

With the advance of quantum technologies, real-world platforms that exploit quantum systems of more than two levels are becoming more and more common \cite{Ringbauer2022-sb,Chi2023-za}. In consonance with this, theoretical developments have been done towards randomness certification in Bell scenarios beyond the binary outcome case. Bell tests for optimal local randomness certification of $\log d$ random bits from projective measurements were 
extended to arbitrary outcome number $d$ in~\cite{Sarkar2021SelfTestingQudits}. Then protocols based on general non-projective measurements for certification of $2\log d$ random bits were extended first to qutrit systems \cite{Borkala2022-vd} and then in \cite{farkas2026maximal} to systems of arbitrary dimension $d$~\footnote{For simplicity, in this work we use $d$ to denote the number of outcomes in the Bell test and the dimension of the considered quantum systems.}.

All these constructions certify optimal local randomness. Outputs in a Bell test, however, are in general correlated and, therefore, when taken together, they are not fully random. Attaining optimal \emph{global randomness} certification, where outputs by Alice and Bob are fully random even when combined, is interesting not only from a fundamental point of view, but also to optimize the rate of random-number generation. Designing Bell tests  to achieve this task turns out much harder than in the local case. There exist constructions for binary outcomes for bipartite systems using two-qubit maximally entangled states, e.g.~\cite{piotr2013,Law_2014}, or any entangled pure state~\cite{PhysRevResearch.2.042028}, and in an arbitrary multipartite scenarios~\cite{DelatorrePRL2015}. Beyond the binary-output case, no systematic construction exists.

In this work, we construct two families of Bell inequalities for $d$-outcome measurements and conjecture, using symmetry arguments introduced in~\cite{dhara2013}, that they attain global randomness certification for any $d$. That is, given their maximal quantum violation, it is possible to identify a measurement by Alice and by Bob whose outputs define $2\log d$ random bits. The first family is simpler as it involves 2 measurements by Alice and $d+1$ by Bob. Optimal global randomness certification is shown for $d=3,4$, up to numerical precision, using the Navascues-Pironio-Acin (NPA) hierarchy~\cite{NPA07PRL,NPA08NJP}. The second construction is slightly more complex, now involving $d$, $d+1$, measurements by Alice, Bob, and optimal global randomness certification is analytically proven for $d=3$. Beyond randomness, for the first family of Bell inequalities we construct a quantum realization conjectured to provide the optimal quantum violation, which is confirmed using NPA up to $d=6$. For the second family of inequalities, we analytically derive the maximal quantum violation and provide a quantum realization for all $d$. 

In an accompanying paper \cite{Barcelona2}, we show that our first family of Bell inequalities exhibits more robustness to noise than all known alternatives when it comes to surpassing the two-bit threshold given by the optimal projective qubit strategies.

\section{Scenario, randomness and symmetry arguments}\label{sec:scenario}

We consider a standard Bell scenario where two parties, Alice and Bob, have access to $m_a$ and $m_b$ measurements settings respectively and all those measurements have $d$ possible outcomes~\footnote{Without losing generality, the number of outcomes can be taken equal for all measurements by Alice and Bob, as if some measurements have less outputs than others, one can always pick $d$ as the maximum and assign a zero probability to some outcomes.}. We refer to this setup as the $(m_a,m_b;d,d)$ scenario. The observed correlations are fully determined by a set of probability distributions $\{p(ab|xy)\}$, where $p(ab|xy)$ is the probability that the devices emit outputs $a$ and $b$, with $a,b \in \{0,\dots,d-1\}$, when Alice chooses measurement setting $x\in \{0,\dots,m_a-1\}$ and Bob chooses $y\in \{0,\dots,m_b-1\}$). We assume quantum mechanics to be a valid (and complete) description of the setup, so there is always a choice of a shared state $\ket{\psi}$ and local measurements operators $\{M_{a|x} \}$ and $\{M_{b|y} \}$ such that the observed correlations, or behaviors, have the form:
\begin{equation}
    p(ab|xy) = \bra{\psi} M_{a|x} \otimes M_{b|y} \ket{\psi}.
\end{equation}
We refer to the set of all possible behaviors of this form as the quantum set of correlations $\mathcal{Q}$. Other relevant sets are the local polytope $\mathcal{L}$ and the non-signaling polytope $\mathcal{NS}$, which consist of behaviors that admit local hidden-variable models and, respectively, satisfy the non-signaling principle. Since the local set is a polytope, one can witness nonlocality through violation of Bell inequalities. Specifically, a behavior is nonlocal if it violates an inequality of the form $\mathcal{I} \left[ p(ab|xy) \right] \leq \beta_\mathcal{L}$, where $\mathcal{I}$ is a linear functional of the probabilities $p(ab|xy)$ and $\beta_\mathcal{L}$ is its maximum value over the local set.

In the following, it will be useful to work with the Fourier transform of the behaviors:
\begin{equation}
  \langle  A_x^{(k)} B_y^{(l)} \rangle= \sum_{ab} \omega^{ak+bl} p(ab|xy) ,
\label{eq:ft}
\end{equation}
where $\omega = \exp\left( 2\pi i/d \right) $. This notation is motivated by the fact that, for a quantum behavior, we can always map the measurements to unitary operators: $A_x = \sum_a \omega^a \Pi_{a|x}$ and $B_y = \sum_b \omega^b \Pi_{b|y}$, where $\Pi_{a|x}$ and $\Pi_{b|y}$ are the projectors onto the eigenspaces of Alice's and Bob's outcomes for measurement choices $x$ and $y$, respectively. Therefore:
\begin{equation}
  \bra{\psi} A_x^k \otimes B_y ^l \ket{\psi} = \sum_{ab} \omega^{ak+bl} p(ab|xy) = \langle  A_x^{(k)} B_y^{(l)} \rangle.
\end{equation}

In this work, we are interested in finding settings $x,y$ for which the outputs are certified to be maximally random. A necessary condition is that $p(ab|xy) = 1/d^2 \; \forall a,b$. It follows from \eqref{eq:ft} that this is the case if and only if $\langle A_x^k B_y^l \rangle = 0$ for any combination of $k$ and $l$ except $k=l=0$ for which $\langle A_x^0 B_y^0 \rangle = \sum_{ab}p(ab|xy) = 1$. 

Of course, the condition $p(ab|xy)=1/d^2$ is not enough to certify genuine and secure randomness, as it only describes the statistics observed by Alice and Bob, which may still appear non-random to a third party correlated with the setup. The outcomes of the measurements are truly random if and only if they cannot be predicted with probability one by any external party. To quantify the amount of randomness present in the measurements' outcomes, one defines the guessing probability for an eavesdropper, $G(AB|x,y)$, as the probability of Eve guessing the outputs $a,b$, when settings $x,y$ are chosen, optimized over all possible attacks compatible with Alice's and Bob's observed correlations $p(ab|xy)$, see~\cite{TuraPRL2025} and Appendix \ref{app:randsdp}, Eq.~\eqref{eq:global_guessing_probability}~\cite{nieto2014using,bancal2014more}. As it turns out, one can obtain upper bounds for $G(AB|x,y)$ by means of semi-definite programming~\cite{mironowicz2024semi}, so one can state certificates of randomness in an efficient manner.

Our approach to derive Bell inequalities for maximal global randomness certification is based on a symmetry argument presented in~\cite{dhara2013}. It is often the case, but not always, that the correlations attaining the maximal quantum violation of a Bell inequality are unique. Given a Bell inequality, and under the assumption that its maximal quantum violation is unique, it is easy to conclude that it certifies maximal randomness whenever it is invariant under certain symmetry transformations. To illustrate this idea, we start with the inequality considered in~\cite{piotr2013,Law_2014,gallego2010}
, which in fact represents the first case, $d=2$, of our general constructions for arbitrary $d$.

\subsection{An example: the $d=2$ case}

Consider the Bell operator defined in Refs. \cite{piotr2013,Law_2014,gallego2010}:
\begin{equation}
    \bar{\mathcal{I}}_2 \equiv  A_0\otimes \left( B_0 + B_1 \right)  +  A_1\otimes \left( B_0 + \omega B_1  \right)   + A_0\otimes B_2  \, ,
    \label{eq:I2}
\end{equation}
where all observables satisfy $A_x^2=B_y^2=\mathbbm{1}$.
Note that, as $\omega=-1$ in this case, the first two terms form the CHSH operator \cite{chsh} for settings $x,y \in \{0, 1 \}^2$. Now, consider the transformation $\mathcal{T}$ that maps $A_1 \rightarrow \omega A_1$ and $B_0 \leftrightarrow B_1$, that is, it permutes the outputs of Alice's second measurement and Bob's measurement settings. It is direct to see that this transformation leaves the Bell functional $\langle \bar{\mathcal{I}}_2 \rangle$ invariant, while it maps $\langle A_1 B_2 \rangle \rightarrow \omega \langle A_1 B_2 \rangle $ and $\langle A_1 \rangle \rightarrow \omega \langle A_1 \rangle $. We can also consider another transformation $\mathcal{T}'$, which just swaps the outputs of all Alice's and Bob's measurements: $A_x \rightarrow \omega A_x$ and $B_y \rightarrow \omega B_y$. This transformation also leaves $\langle \bar{\mathcal{I}}_2 \rangle$ invariant and, among other things, maps $\langle B_2 \rangle \rightarrow \omega \langle B_2 \rangle $.

Now, if the maximal violation of  $\langle \bar{\mathcal{I}}_2 \rangle$ is unique, the previous symmetry transformations imply that $\langle A_1 B_2 \rangle = \omega \langle A_1 B_2 \rangle $ for all combinations of $k$ and $l$ but $k=l=0$, which, since $\omega=-1$, automatically implies that all these correlators are zero and two bits of global randomness are certified for measurement settings $x=1$ and $y=2$.

In this particular case, neither symmetry arguments nor the uniqueness assumption is needed, given that the optimal global randomness follows from the self-testing property of the CHSH inequality \cite{Cirelson1980QuantumGeneralizations,McKague_2012}. The maximal quantum violation of $\langle\bar{\mathcal{I}}_2\rangle$ is upper bounded by $2\sqrt 2 +1$, because of the Tsirelson bound for CHSH and $\langle A_0\otimes B_2 \rangle\leq 1$. To attain this bound (i) the maximal quantum of CHSH has to achieved, which forces the state to be maximally entangled and $A_0$, $A_1$ to be projective measurements over a pair of mutually unbiased bases, say $A_0=Z$ and $A_1=X$; (ii) $\langle A_0\otimes B_2 \rangle= 1$, which implies that $B_2$ should be perfectly correlated with $A_0$, say $B_2=Z$. It is now easy to see that outputs of $A_1$ and $B_2$ define two fully random bits. Below, we build over these ideas to derive our general construction.

\section{First family of Bell inequalities}\label{sec:1fam}

For simplicity, we start by presenting the $d=3$ case,
and then move to the general case.

\subsection{The $d=3$ case}

We generalize the previous Bell functional as follows. We consider the $(2,4;3,3)$ scenario and define:
\begin{eqnarray}
    \bar{\mathcal{I}}_3&\!\!\equiv\!\!&  \frac{1}{2} \left[  A_0\otimes \left( B_0 + B_1 +B_2\right) +  A_0\otimes B_3\right.\nonumber\\
     &&\left. \hspace{0.4cm}+  A_1\otimes \left( B_0 + \omega B_1 + \omega^2  B_2 \right)    \right] + \text{h.c.,}
    \label{eq:i3bar}
\end{eqnarray}
%
where the global factor of $1/2$ is just added for later convenience. If we apply the transformation $\mathcal{T}$ that maps $A_1 \rightarrow \omega A_1$ and cyclically permute Bob's first $3$ measurements $B_y \rightarrow B_{y \bigoplus_3 1}$, it is easy to see that $ \bar{\mathcal{I}}_3 $ remains invariant. Note that $\mathcal{T}$ maps $\langle A_1^k \rangle \rightarrow \omega^k \langle A_1^k \rangle $ and $\langle A_1^k B_3^l \rangle \rightarrow \omega^k \langle A_1^k B_3^l \rangle $. On the other hand, $\mathcal{T}'$, defined as $A_x \rightarrow \omega^2 A_x$ and $B_y \rightarrow \omega B_y$ for all settings $x$ and $y$, leaves $ \bar{\mathcal{I}}_3 $ invariant and maps $\langle B_3^l \rangle \rightarrow \omega^l \langle B_3^l \rangle $. From the previous observations, we conclude that if the maximal violation happens to be unique, all non-trivial correlators are zero and then optimal randomness ($2\log(3)$ bits of randomness) is certified for measurement settings $x=1$ and $y=3$. While we cannot prove analytically the uniqueness of the maximal quantum violation of $\bar{\mathcal{I}}_3$, we show in what follows that it certifies two random trits for settings $x=1$ and $y=3$, up to numerical precision, using NPA. For that, it is convenient to study the maximal quantum violation of $\bar{\mathcal{I}}_3$.

First, note that $\bar{\mathcal{I}}_3$ can be thought of as the expanded version of a simpler Bell operator:

\begin{eqnarray}
    \mathcal{I}_3&\equiv & \frac{1}{2} \left[  A_0 \otimes \left( B_0 + B_1 +B_2\right) \right.\nonumber\\ 
    &&\left.\hspace{0.4cm}+    A_1 \otimes \left( B_0 + \omega B_1 + \omega^2  B_2 \right)  \right] + \text{h.c.,}
    \label{eq:i3}
\end{eqnarray}
defined in the $(2,3;3,3)$ scenario. Note that $\mathcal{I}_3$ has the same structure as the Buhrman-Massar functional \cite{bm}, but removing Alice's third measurement. Finding the optimal quantum value of $\langle\mathcal{I}_3\rangle$ automatically gives an upper bound for the optimal value of $\langle\bar{\mathcal{I}}_3\rangle$, because $\langle\bar{\mathcal{I}}_3\rangle\leq \langle\mathcal{I}_3\rangle+1$. One can analytically derive the following upper bound: $\langle\mathcal{I}_3\rangle \leq 3\sqrt2 \simeq 4.243$ (see Appendix \ref{app:ibounds} for details). However, this upper bound is not attainable, as it can be seen by optimizing the functional over the $1+AB$ level of the NPA hierarchy \cite{NPA07PRL,NPA08NJP}, which gives the tighter bound $\langle \mathcal{I}_3\rangle \leq 4$ (up to numerical precision), which turns out to be tight, as it can be reached by a see-saw optimization~\cite{seesawWernerWolf2001,seesawPalVertesi2010} over quantum strategies.

To find a quantum strategy attaining the bound $\langle\mathcal{I}_3\rangle =4$ one could simply resort to the output of the see-saw algorithm. However, it is instructive for the subsequent discussion to derive it by taking inspiration out of the qubit case and the intuition underlying our construction. We take a maximally entangled state, $ \ket{\psi_3} = (1/\sqrt3) \sum_k \ket{kk}$, and Alice's measurements as the generalized $Z$ and $X$ operators: 
\begin{equation}
        A_0 = Z = \sum_{k=0}^2 \omega ^ k \ket{k}\!\!\bra{k},\quad
        A_1 = X = \sum_{k=0}^2 \ket{k+1}\!\!\bra{k} .
\end{equation}
Note that they give two unbiased bases. Then, we choose Bob's measurements in such a way that:
\begin{equation}
B_0 + B_1 +   B_2 = \alpha Z^2, \quad B_0 + \omega B_1 + \omega^2  B_2 = \alpha X, 
    \label{eq:bobsopd3}
\end{equation}
where $\alpha$ is a constant that we aim to maximize. As the state $ \ket{\psi_3}$ is stabilized (it is an eigenstate with eigenvalue $1$) by both $Z \otimes Z^2$ and $X \otimes X$, we have that this strategy will saturate the quantum bound $\langle\mathcal{I}_3\rangle = 4$ iff $\alpha = 2$. It turns out that such a choice of Bob's measurements is possible and given by:
\begin{equation}
    B_y = \frac{1}{3} \left( 2Z^2 + 2\omega^{2y} X - \omega^{y+1} X^2Z \right)
    \label{eq:bobsalld3}
\end{equation} 
(see Appendix \ref{app:optstrat} for details). From here, and given that the state is maximally entangled, it is easy to find an optimal strategy for $\bar{\mathcal{I}}_3$: we take the previous strategy and add $B_3 = Z^2$ so that $\langle A_0 B_3 \rangle =1$, giving $\bar{\mathcal{I}}_3=5$. Note that this forces $A_1$ and $B_3$ to be unbiased and, given that the state is maximally entangled, we get: $p(ab|x=1,y=3)=1/9 \,\, \forall a,b$. We emphasize that this quantum realization is the analogous of that obtained for the qubit case, in the sense that the parties share the maximally entangled state of two qutrits while Alice performs two measurements corresponding to two unbiased bases and Bob aligns his last measurement with one of Alice's.

With the derived strategy, attaining the optimal quantum violation $\langle \bar{\mathcal{I}}_3\rangle=5$,
we compute the device-independent guessing probability following the numerical methods depicted in Appendix \ref{app:randsdp}, getting $G(AB|x=1,y=3)=1/9$ up to numerical precision. Then, our inequality achieves the optimal global randomness certification in the $(2,4;3,3)$ scenario.

\subsection{The general case}

The results presented before can be generalized to the $(2,d+1,d)$ scenario by defining:
\begin{equation}
    \bar{\mathcal{I}}_d \equiv \frac{1}{2} \left[  \left( \sum_{x=0}^1\sum_{y=0}^{d-1} \omega^{xy}A_x \otimes B_y \right)  +  A_0 \otimes B_d
    \right]  + \text{h.c.}     
    \label{eq:rand_ineq_gen}
\end{equation}
This expression is invariant under both the transformation $\mathcal{T}$ that maps $A_1 \rightarrow \omega A_1$ and cyclically permute Bob's first $d$ measurements $B_y \rightarrow B_{y \bigoplus_d 1}$, and the transformation $\mathcal{T'}$ defined as $A_x \rightarrow \omega^{d-1} A_x$ and $B_y \rightarrow \omega B_y$ for all settings $x$ and $y$. As above, assuming the uniqueness of the maximal quantum violation, these symmetries imply that the outcomes of settings $x=1$ and $y=d+1$ certify $2\log d$ bits of randomness. In what follows, we provide different analytical and numerical arguments supporting the fact that the maximal quantum violation of $\bar{\mathcal{I}}_d$ certifies optimal global randomness.

As before, we can start by moving to the simpler $(2,d;d,d)$ scenario by removing Bob's last measurement and define:
\begin{equation}
    \mathcal{I}_d \equiv \frac{1}{2}\sum_{x=0}^1\sum_{y=0}^{d-1} \omega^{xy}A_x \otimes B_y    + \text{h.c.,}
    \label{eq:red_ineq_gen}
\end{equation}
to ease the study of the classical, quantum and non-signaling bounds. We state here the classical bound. 
\begin{prop}
The classical bound is given by:
\begin{equation}
           \beta_\mathcal{L}(d) = 2\left[ 1 + 2 \sum_{\ell=1}^{\lfloor \frac{d+1}{2}\rfloor - 1} \cos\left( \frac{\pi \ell }{d} \right) f_\ell \right] . 
\end{equation}
where $ f_\ell = \cos (\pi/d)$ if $\ell$ is odd and $f_\ell = 1$ otherwise.
\end{prop}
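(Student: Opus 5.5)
The plan is to use the standard fact that the local bound of a linear Bell functional is attained at a deterministic strategy, since $\mathcal{L}$ is the convex hull of deterministic behaviours. For a deterministic strategy, every operator $A_x$, $B_y$ is replaced by a $d$-th root of unity $a_x=\omega^{\alpha_x}$, $b_y=\omega^{\beta_y}$. Because of the $\tfrac12(\cdot)+\text{h.c.}$ structure, the value of $\mathcal{I}_d$ in \eqref{eq:red_ineq_gen} becomes
\begin{equation*}
\langle\mathcal{I}_d\rangle=\sum_{y=0}^{d-1}\mathrm{Re}\left[b_y\left(a_0+\omega^{y}a_1\right)\right].
\end{equation*}
The computation then proceeds in three steps.

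First, I will use the symmetry $\mathcal{T}'$ (a global phase shift of all outputs) and the freedom to relabel Alice's outputs to fix $a_0=1$ and $a_1=\omega^k$. Then $z_y:=1+\omega^{y+k}$. As $y$ runs over $\mathbb{Z}_d$, the index $j=y+k$ also runs over all of $\mathbb{Z}_d$. Since the $b_y$ can be optimised independently for each $y$, the maximum is independent of $k$:
\begin{equation*}
\beta_\mathcal{L}(d)=\sum_{j=0}^{d-1}\max_{b\in\{\omega^m\}}\mathrm{Re}\left[b\left(1+\omega^j\right)\right].
\end{equation*}

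Second, I will evaluate each inner maximum by writing $1+\omega^j=2\cos(\pi j/d)\,e^{i\pi j/d}$. The maximum of $\mathrm{Re}(b z)$ over $d$-th roots of unity equals $|z|\cos\theta$, where $\theta$ is the angular distance from $\arg z$ to the nearest root. When $\cos(\pi j/d)<0$, I absorb the sign as a shift of the phase by $\pi$. The phase of $z$ is then either a multiple of $2\pi/d$, giving $\theta=0$, or exactly halfway between two consecutive roots, giving $\theta=\pi/d$ and the factor $\cos(\pi/d)$. Which case occurs depends only on the parity of $j$, or of $j+d$ when the sign was flipped. This is the origin of the factor $f_\ell$. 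The $j=0$ term contributes $2$.

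Third, I will pair $j$ with $d-j$. These two terms have equal $|\cos|$, and because $d-j\equiv -j$ the phases are conjugate, so they contribute equally. Collecting them gives $2\cdot 2|\cos(\pi\ell/d)|f$ for $\ell=1,\dots,\lfloor(d+1)/2\rfloor-1$. When $d$ is even, the unpaired middle term $j=d/2$ has $1+\omega^{d/2}=0$ and drops out. Finally, I will check that the relevant parity always reduces to that of $\ell$ with $\ell\le d/2$, so that $\cos(\pi\ell/d)\ge 0$ and no absolute values remain. This yields exactly the stated formula.

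The main obstacle is this bookkeeping: tracking the parity shift when $d$ is odd and the sign of the cosine is flipped, so that $f_\ell$ is assigned by the parity of $\ell$ itself, and handling the endpoint $\ell=\lfloor(d+1)/2\rfloor$ separately for odd and even $d$. I would first verify the formula for $d=2,3,4$ by direct enumeration of deterministic strategies. The value $2$ at $d=2$ should recover the CHSH classical bound in this normalisation. I would then write out the two parity cases of $d$ in general.
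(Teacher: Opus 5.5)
Your proposal is correct and follows essentially the same route as the paper: reduce to deterministic strategies, absorb Alice's outputs into Bob's by relabeling so that each $b_y$ is optimized independently (value $1$ or $\cos(\pi/d)$ according to parity), and pair the terms $j$ and $d-j$, dropping the vanishing $j=d/2$ term for even $d$. Your pairing via $1+\omega^{d-j}=\overline{1+\omega^{j}}$ is slightly cleaner than the paper's sign/parity bookkeeping. It also removes the ``main obstacle'' you anticipate, since you only ever need to evaluate the member of each pair with $\ell<d/2$.
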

Expressions for the non-signalling bound and an analytical upper bound on the quantum value are provided in the Appendix~\ref{app:ibounds}.  Now, we will see that a generalization of the strategy used in the $d=2$ and $d=3$ cases will give us the optimal quantum values for $d=4$, $d=5$ and $d=6$. We conjecture that this is the case for $d\geq7$ as well.

As before, we start by fixing the state to be $ \ket{\psi_d} = (1/\sqrt d) \sum_i \ket{ii}$ and Alice's measurements as:
\begin{equation}
        A_0 = Z = \sum_k^{d-1} \omega ^ k \ket{k}\!\!\bra{k} , \;
        A_1 = X = \sum_k^{d-1} \ket{k+1}\!\!\bra{k}\,.
\end{equation}
We demand Bob's measurements to satisfy:
\begin{equation}
       \sum_y B_y  = \alpha_d Z^\dagger, \quad
       \sum_y \omega^y B_y = \alpha_d X , 
    \label{eq:bobsopsd}
\end{equation}
with $\alpha_d$ as large as possible. It turns out that an optimal construction achieving this is given by:
\begin{equation}
    B_y = \sum_{k=0}^{d-1} \lambda_{y,k} X^{k+1} Z^k ,
    \label{eq:optbob}
\end{equation}
where the coefficients $\lambda_{y,k}$ can be found in Appendix \ref{app:optimalconstruction}. This choice of coefficients gives rise to valid measurements (that is, $B_y$ are unitary and with eigenvalues equal to the $d$ roots of unity for all $y)$. From the definition of $\lambda_{y,k}$, it can be seen that:
\begin{equation}
    \sum_y \omega^{my} \lambda_{y,k} = d \lambda_{0,k} \delta_{k,m-1},
    \label{eq:relationslambda}
    \end{equation}
where $\delta_{a,b}$ is Krönecker's delta. Particularizing at $m=0$ and $m=1$, we see that relations \eqref{eq:bobsopsd} hold with $\alpha_d=\left[ \sin \left(\frac{\pi}{2d} \right)  \right] ^{-1}$, so this strategy gives:
\begin{equation}\label{eq:tsir_bound}
    \mathcal{I}_d = 2 \left[ \sin \left(\frac{\pi}{2d} \right)  \right] ^{-1} .
\end{equation}

We conjecture that this realization provides the maximal violation of $\mathcal{I}_d$.
\begin{conj}
The Tsirelson bound of $\mathcal{I}_d$ is given by Eq.~\eqref{eq:tsir_bound} 
and is attained by taking $A_0=Z$, $A_1=X$; the state $\ket{\psi_d} = \frac{1}{\sqrt d} \sum_k \ket{kk}$; and $B_y$ defined by Eq. \eqref{eq:optbob}.
\label{conj:i}
\end{conj}

For $d=2$ and $d=3$ we recover $\langle \mathcal{I}_2 \rangle = 2\sqrt2$ and $\langle \mathcal{I}_3 \rangle = 4$. Using NPA, we can numerically see that the conjecture holds for $d=4,5,6$, up to numerical precision (see Appendix \ref{app:numericalbounds} for details). It is worth noting that the level of NPA required to see that $\langle \mathcal{I}_d \rangle$ is at most equal to the conjectured maximal violation increases with $d$, which explains why we could not go beyond $d=6$. 

For low dimensions, this inequality is notably robust and, in particular, it allows to certify nonlocality even in presence of $25\%$ of depolarizing noise for $d=3$ (see Appendix \ref{app:ratio} for details).

For $\bar{\mathcal{I}}_d$, the above strategy gives $\bar{\mathcal{I}}_d =  2 \left[ d\sin \left(\frac{\pi}{2d} \right)  \right] ^{-1} + 1 $ after setting $B_d = Z^\dagger$. Using NPA, we have checked that this violation certifies optimal global randomness for $d=3$ and $d=4$; and we find promising bounds for $d=5$ (see Appendix \ref{app:optrand34}). This, together with the symmetries present in the functional, leads us to conjecture that:
\begin{conj}
The maximal violation of $\langle\bar{\mathcal{I}}_d\rangle$, which is equal to $2 \left[ d\sin \left(\frac{\pi}{2d} \right)  \right] ^{-1} + 1$ certifies $2\log d$ bits of randomness for settings $x=1$ and $y=d$.
\label{conj:ibar}
\end{conj}

An analysis of the robustness of the randomness certification protocol in $d=3$ is detailed in Appendix \ref{app:rob}.

\section{Second family of Bell inequalities}\label{sec:2fam}

Using equations \eqref{eq:relationslambda}, it is easy to see that the optimal Bob's operators \eqref{eq:optbob} from the previous family of Bell functionals satisfy the following Fourier relations:
\begin{equation}
    \sum_y \omega ^ {my} B_y  = d\lambda_{0,m-1} X^mZ^{m-1} .
    \label{eq:ftbobsop}
\end{equation}
The optimal construction presented in the last section took advantage of setting $A_0 = Z$ and $A_1 = X$ so that their respective tensor product with the $m=0$ and $m=1$ instances of \eqref{eq:ftbobsop} stabilizes the maximally entangled state. This suggests that we can define a new functional in the $(d,d;d,d)$ scenario, whose optimal strategy extends that of $\mathcal{I}_d$ by adding $d-2$ measurements on Alice's side, each stabilizing the maximally entangled state together with the operators in \eqref{eq:ftbobsop} for $m=2,\dots,d-1$. More concretely, we define:
\begin{equation}
    \mathcal{F}_d = \frac{1}{2}\sum_{x} \lambda_{0,x-1}^*    A_x\otimes \sum_y \omega^{xy} B_y  + \text{h.c.,}
    \label{eq:bell_general_expr}
\end{equation}
where the factor of $\frac{1}{2}$ is included just for convenience. Note that picking Bob's operators as \eqref{eq:optbob}, Alice's operators as $A_m = \left( X ^m Z^{m-1} \right)$, and the state $\ket{\psi}$ we get:
\begin{equation}
A_m \otimes \sum_y \omega^{my} B_y =  \left( X ^m Z^{m-1} \right)^* \otimes \left( X ^m Z^{m-1} \right) , 
\end{equation}
which stabilize the maximally entangled state. Therefore, using that state and those measurements we obtain:
\begin{equation}
    \langle \mathcal{F}_d \rangle = d\sum_{m} |\lambda_{0,m-1}|^2  = d
\end{equation}
where we used the fact that $\sum_k |\lambda_{0,k}|^2 = 1$.

To show that this strategy is optimal, we note that this new family of Bell inequalities can be easily written in terms of a Sum-Of-Square (SOS) decomposition:
\begin{equation}
    \beta_d \mathbbm{1} - \mathcal{F}_d \equiv \frac{1}{2d}\sum_\ell P_\ell ^\dagger P_\ell ,
    \label{eq:SOS}
\end{equation}
where:
\begin{equation}
    P_\ell = d\lambda_{0,\ell-1} \mathbbm{1} - A_\ell\otimes  \sum_y \omega^{y\ell} B_y
    \label{eq:P}
\end{equation}
with $\ell=0,\dots,d-1$.
\begin{Theorem}
    The Tsirelson bound of $\langle \mathcal{F}_d \rangle$ is given by $\beta_d = d$ and is attained by measuring $A_j = \left( X^j Z^{j-1} \right)^* $ and $B_k$ given by Eq. \eqref{eq:optbob}  on $\ket{\psi_d} = (1/\sqrt{d}) \sum_i \ket{ii}$.
    \label{prop:newineq}
\end{Theorem}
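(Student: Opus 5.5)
The plan has two parts: prove the upper bound $\langle\mathcal{F}_d\rangle\le d$ from the SOS decomposition \eqref{eq:SOS}, and then check that the stated realization attains it. Both rest on the unitarity of Alice's and Bob's observables, $A_x^\dagger A_x=B_y^\dagger B_y=\mathbbm{1}$, together with the normalization $\sum_k|\lambda_{0,k}|^2=1$.

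\emph{Verifying the SOS identity.} Write $\tilde B_\ell=\sum_y\omega^{y\ell}B_y$ and expand each term:
\begin{equation*}
P_\ell^\dagger P_\ell = d^2|\lambda_{0,\ell-1}|^2\mathbbm{1} - d\lambda_{0,\ell-1}^* A_\ell\otimes\tilde B_\ell - d\lambda_{0,\ell-1}A_\ell^\dagger\otimes\tilde B_\ell^\dagger + \mathbbm{1}\otimes\tilde B_\ell^\dagger\tilde B_\ell .
\end{equation*}
Here $A_\ell^\dagger A_\ell=\mathbbm{1}$ was used. Summing over $\ell$ and dividing by $2d$, the cross terms give exactly $-\mathcal{F}_d$, by \eqref{eq:bell_general_expr}. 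The constant terms give $\frac{d}{2}\sum_\ell|\lambda_{0,\ell-1}|^2=\frac d2$.

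The key step is the Bob-only term. By discrete Parseval,
\begin{equation*}
\sum_\ell \tilde B_\ell^\dagger\tilde B_\ell=\sum_{y,y'}B_y^\dagger B_{y'}\sum_\ell\omega^{\ell(y'-y)}=d\sum_y B_y^\dagger B_y=d^2\,\mathbbm{1},
\end{equation*}
which uses unitarity of the $B_y$. So this term contributes $\frac{d^2}{2d}=\frac d2$.

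Altogether, $\frac{1}{2d}\sum_\ell P_\ell^\dagger P_\ell=d\,\mathbbm{1}-\mathcal{F}_d$, which is \eqref{eq:SOS} with $\beta_d=d$. Since each $P_\ell^\dagger P_\ell\succeq0$, we get $\bra{\psi}\mathcal{F}_d\ket{\psi}\le d$ for every state and every set of unitary observables with $d$-th-root-of-unity spectra. No dimension restriction is needed, because Naimark dilation lets us assume projective measurements, and hence unitary $A_x,B_y$. One point must be checked: the SOS expansion needs only $A^\dagger A=\mathbbm{1}$, not $A^d=\mathbbm{1}$, so it holds in full generality. The main care is getting the conjugation convention right, i.e.\ that the coefficient $\lambda^*_{0,x-1}$ in \eqref{eq:bell_general_expr} matches the cross term $-d\lambda_{0,\ell-1}^*A_\ell\otimes\tilde B_\ell$ produced by \eqref{eq:P}.

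\emph{Attainment.} It suffices to show $P_\ell\ket{\psi_d}=0$ for all $\ell$, or equivalently to evaluate $\langle\mathcal{F}_d\rangle$ directly. By \eqref{eq:ftbobsop}, $\tilde B_\ell=d\lambda_{0,\ell-1}X^\ell Z^{\ell-1}$. With $A_\ell=(X^\ell Z^{\ell-1})^*$ and the identity $(M^*\otimes M)\ket{\psi_d}=\ket{\psi_d}$ for any unitary $M$ (equivalently $(\mathbbm{1}\otimes M)\ket{\psi_d}=(M^T\otimes\mathbbm{1})\ket{\psi_d}$), we obtain $A_\ell\otimes\tilde B_\ell\ket{\psi_d}=d\lambda_{0,\ell-1}\ket{\psi_d}$. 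Hence $P_\ell\ket{\psi_d}=0$ for every $\ell$, and the value $d$ is reached.

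It remains to confirm that $A_\ell$ is a valid $d$-outcome measurement, i.e.\ that its spectrum is the set of $d$-th roots of unity (up to a global phase that can be absorbed). The $\ell=0$ case reads $(X^0Z^{-1})^*=Z$, which is fine. For $\ell\ge1$, $X^\ell Z^{\ell-1}$ is a Weyl operator, so it has the required spectrum up to a phase $\omega^{c}$ (or $\omega^{c/2}$ when $d$ is even). Such a phase can be moved into $\lambda_{0,\ell-1}$ without changing the bound. This is the only potentially fiddly step; for even $d$ I would handle it by relabeling outcomes, since the SOS bound does not depend on the spectrum at all.

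Bob's operators $B_y$ are valid measurements by the construction in Appendix~\ref{app:optimalconstruction}, which is taken as given.
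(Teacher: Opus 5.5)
Your argument is correct and is the paper's own: the upper bound comes from the SOS decomposition \eqref{eq:SOS}, where you supply the Parseval step $\sum_\ell\tilde B_\ell^\dagger\tilde B_\ell=d^2\mathbbm{1}$ that the paper leaves implicit, and attainment comes from $\tilde B_\ell=d\lambda_{0,\ell-1}X^\ell Z^{\ell-1}$ together with $(M^*\otimes M)\ket{\psi_d}=\ket{\psi_d}$. Your closing hedge about spectra is unnecessary, since $(X^\ell Z^{\ell-1})^d=\omega^{\ell(\ell-1)d(d-1)/2}\mathbbm{1}=\mathbbm{1}$ for every $d$ (because $\ell(\ell-1)$ is even), so each $A_\ell$ is already a valid $d$-outcome observable; the fixes you proposed would not have worked anyway, because absorbing a phase into $\lambda_{0,\ell-1}$ changes $\mathcal{F}_d$ itself, and multiplying $A_\ell$ by a nontrivial phase lowers the attained value below $d$.
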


\begin{proof}
We know that $\langle \mathcal{F}_d \rangle  \geq d$. That $\langle \mathcal{F}_d \rangle  \leq d$ follows from the SOS decomposition and $\sum_k |\lambda_{0,k}|^2 = 1$.
\end{proof}

As one could expect, this inequality is also invariant under symmetry transformations analogous to those presented in the previous sections: $\tau:A_x\mapsto \omega^xA_x,B_y\mapsto B_{y+1}$ leaves the functional invariant and this is also the case for $\tau':A_x\mapsto \omega^{d-1}A_x, B_y\mapsto \omega B_y$. When adding an extra measurement on Bob's side and defining $\langle \bar{\mathcal{F}}_d\rangle  \equiv \langle \mathcal{F}_d\rangle + \langle A_0\otimes B_d\rangle$, the invariance of those transformations implies, following~\cite{dhara2013}, that the maximal quantum violation of the inequality attains optimal global randomness certification for any $d$ for settings $x=1$ and $y=d$, if this maximal violation happens to be unique. This is yet unproven but seems plausible, given the form of the obtained quantum realization and SOS decomposition. However, the fact that this family requires roughly twice more measurements than the first, makes it less appealing from an experimental viewpoint.

The appeal of this new family is that, as it admits an easy SOS decomposition, it could be used to derive some analytical certification statements, as for example self-testing of $X$ and $Z$ measurements in higher dimensions or even larger sets of mutually unbiased bases (MUBs) (note that, whenever $d$ is prime, our optimal strategy uses $d$ MUBs on each side). In fact, for $d=3$, we can give a full self-testing statement for the optimal strategy:
\begin{Theorem}\label{thm:selftesting}
The maximal violation of $\langle \mathcal{F}_3 \rangle$ is attained uniquely (up to local isometries and transposition map) by measurements $A_j = (
X^jZ^{j-1})^*$ for $j=0,1,2$ and $B_k$ given by \eqref{eq:optbob} performed on
$\ket{\psi} = (\ket{00}+\ket{11}+\ket{22})/\sqrt3.$
\end{Theorem}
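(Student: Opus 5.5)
We derive the self-test directly from the SOS decomposition \eqref{eq:SOS}. Suppose a state $\ket{\psi}$ (purified, on $\mathcal{H}_A\otimes\mathcal{H}_B$) and unitary observables $A_x$, $B_y$ with $A_x^3=B_y^3=\mathbbm{1}$ attain $\langle\mathcal{F}_3\rangle=3$. By Theorem~\ref{prop:newineq} and \eqref{eq:SOS}, every term of the SOS must vanish, so $P_\ell\ket{\psi}=0$ for $\ell=0,1,2$. Equivalently,
\begin{equation}
\Big(\sum_y\omega^{\ell y}B_y\Big)\ket{\psi}=3\lambda_{0,\ell-1}\,A_\ell^\dagger\ket{\psi},\qquad \ell=0,1,2 .
\end{equation}
Inverting the discrete Fourier transform gives $B_y\ket{\psi}=\sum_\ell \omega^{-\ell y}\lambda_{0,\ell-1}A_\ell^\dagger\ket{\psi}$. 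Thus each of Bob's observables, acting on the state, is expressed through Alice's operators.

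The first step is to extract algebraic relations on Alice's side, restricted to the supports of the reduced states. We impose $B_y^3=\mathbbm{1}$ and unitarity $B_y^\dagger B_y=\mathbbm{1}$ on $\ket{\psi}$. To do so, we move operators through the state using the relations above; since operators on different sides commute, $B_yB_{y'}\ket{\psi}$ can be rewritten by substituting for one $B$ and then commuting. The result should be polynomial identities in the $A_\ell$ acting on $\ket{\psi}$. Using the explicit values of $\lambda_{0,k}$ from Appendix~\ref{app:optimalconstruction} for $d=3$, we aim to obtain the qutrit Weyl--Heisenberg relations. Specifically, we want $A_1A_0\ket{\psi}=\omega^{c}A_0A_1\ket{\psi}$ for the appropriate phase $c$ (the commutation of $X$ and $Z^*$), and $A_2\ket{\psi}\propto (A_1A_0)^{\dagger}\ket{\psi}$ up to the correct phase. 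This mirrors the $d=3$ self-testing arguments of \cite{Sarkar2021SelfTestingQudits}.

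Next we derive the symmetric relations for Bob. From the same equations we can write $A_\ell\ket{\psi}$ in terms of Bob's Fourier combinations $\tilde B_\ell\equiv\sum_y\omega^{\ell y}B_y/(3\lambda_{0,\ell-1})$. We show that $\tilde B_0$ and $\tilde B_1$ are unitary of order three on the support of $\rho_B$: their norms on the state are forced to equal one, and saturation of $\|\tilde B_\ell\ket\psi\|\le 1$ implies this. They also satisfy the same twisted commutation relation. On the support we may assume $\rho_A,\rho_B$ are full rank. Then operators satisfying $U^3=V^3=\mathbbm{1}$ and $VU=\omega UV$ decompose as $\mathbb{C}^3\otimes\mathcal{H}_{\rm junk}$ with $U=Z\otimes\mathbbm{1}$, $V=X\otimes\mathbbm{1}$, by the standard representation theory of the Weyl--Heisenberg group. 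Here the transposition/complex-conjugation ambiguity appears, since $\omega\leftrightarrow\omega^2$ is not excluded by real correlations. From this we build the local isometry in the standard SWAP-circuit way, using the Fourier projectors $\Pi_a$ of $A_0$ (respectively of $\tilde B_0$) and shifts by $A_1$. The relations $A_0\otimes\tilde B_0\ket\psi=\ket\psi$ and $A_1\otimes\tilde B_1\ket\psi=\ket\psi$ then force $\Phi\ket{\psi}=\ket{\psi_3}\otimes\ket{\rm junk}$.

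Finally, $A_2$ is fixed on the support from the relation derived in the first step. Each $B_y$ is then recovered from the inverse-Fourier expression, giving \eqref{eq:optbob} tensored with identity.

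The main obstacle is the first step. The state relations fix only the Fourier combinations of Bob, so proving that Alice's operators (equivalently $\tilde B_0,\tilde B_1$) obey the exact commutation relation requires using the spectral constraint $B_y^3=\mathbbm{1}$ nontrivially with the specific $\lambda_{0,k}$. We would first compute $\sum_y B_y^\dagger B_y$ and $\sum_y B_y^2$ expanded in Fourier components. We would then check that the $y$-independence of these quantities yields the cross-term identities, such as $\tilde B_1^\dagger\tilde B_0\propto\tilde B_2$, that encode the commutation relation.
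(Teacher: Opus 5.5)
\textbf{There is a genuine gap, and it is in the target of your first step.} You aim for a fixed-phase Weyl--Heisenberg relation $A_1A_0\ket{\psi}=\omega^{c}A_0A_1\ket{\psi}$. You then apply uniqueness of the representation of $U^3=V^3=\mathbbm{1}$, $VU=\omega UV$. No such relation holds for all optimal strategies.

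Since $\bra{\psi_+}M^T\otimes N^T\ket{\psi_+}=\bra{\psi_+}M\otimes N\ket{\psi_+}$, the transposed realization also attains $\langle\mathcal{F}_3\rangle=3$. So does the direct sum $A_x=A_x'\otimes P_1+(A_x')^T\otimes P_2$, $B_y=B_y'\otimes Q_1+(B_y')^T\otimes Q_2$ on $\ket{\psi_+}\otimes(\alpha\ket{11}+\beta\ket{22})$ with $\alpha\beta\neq 0$, which is exactly the form in the paper's theorem. There $A_0=Z\otimes\mathbbm{1}$ and $A_1=X\otimes P_1+X^2\otimes P_2$. Hence $A_0A_1=\omega A_1A_0$ on one block and $A_0A_1=\omega^2A_1A_0$ on the other. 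No single $c$ works on $\ket{\psi}$, and your representation-theoretic step would force $P_2=0$, excluding valid optimal strategies.

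The $\omega\leftrightarrow\omega^2$ ambiguity therefore cannot be inserted at the end: the relations you extract must be transposition-invariant from the start. The paper does this as follows.
\begin{itemize}
\item For each $y$ it derives $N_y^2=N_y^\dagger$, where $N_y=\tfrac13(2A_0+2\omega^yA_1-\omega^{2y}A_2)$ is the combination multiplying $B_y$ in $\mathcal{F}_3$.
\item These three relations combine into $\{A_1,A_2\}=-A_0^\dagger$, $\{A_0,A_2\}=-A_1^\dagger$, $\{A_0,A_1\}=-A_2^\dagger$. This system is invariant under transposition, unlike a commutation relation with fixed phase.
\item Its general solution $A_x'\otimes P_1+(A_x')^T\otimes P_2$ is taken from \cite{Kaniewski_2019}, and the same argument is repeated for Bob's Fourier combinations $C_i$.
\end{itemize}
To stay closer to your route, you would have to show that $A_0A_1A_0^\dagger A_1^\dagger$ is a central unitary $\Omega$ with $\Omega^2+\Omega+\mathbbm{1}=0$, and split along its eigenspaces before invoking representation theory. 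That is essentially the nontrivial content of \cite{Kaniewski_2019}.

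\textbf{A second, fixable problem.} The SOS \eqref{eq:SOS} as you use it gives only $P_\ell\ket{\psi}=0$, i.e.\ $B_y\ket{\psi}=M_y\ket{\psi}$ with $M_y$ acting on Alice's side. From $B_y^3=\mathbbm{1}$ you then get only $M_y^3=\mathbbm{1}$ on the support, not a usable identity.

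Your argument for unitarity of $\tilde B_\ell$ also fails. There is no a priori bound $\|\tilde B_\ell\ket{\psi}\|\le 1$ to saturate: $\tilde B_0=\tfrac12(B_0+B_1+B_2)$ can have norm $3/2$.

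What you need in addition is $P_\ell^\dagger\ket{\psi}=0$. This holds because $\tfrac16\sum_\ell P_\ell P_\ell^\dagger=3\mathbbm{1}-\mathcal{F}_3$ as well, since $\sum_\ell(\sum_y\omega^{\ell y}B_y)(\sum_{y'}\omega^{\ell y'}B_{y'})^\dagger=3\sum_y B_yB_y^\dagger$. It then follows that
\begin{itemize}
\item $B_y^\dagger\ket{\psi}=M_y^\dagger\ket{\psi}$, which together with $B_y^\dagger=B_y^2$ gives $M_y^2=M_y^\dagger$; up to a phase convention for $A_2$, this is the paper's relation $N_y^2=N_y^\dagger$;
\item $\tilde B_\ell^\dagger\tilde B_\ell=\mathbbm{1}$ on the support.
\end{itemize}
This is the role of the $L_iL_i^\dagger$ terms in the paper's alternative SOS $3\mathbbm{1}-\mathcal{F}_3=\tfrac14\sum_i(L_i^\dagger L_i+L_iL_i^\dagger)$, whose $L_i$ each contain a single $B_i$.
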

\begin{proof}
    The proof relies on the results of \cite{Kaniewski_2019} and can be found in Appendix \ref{app:selftesting}. 
\end{proof}

A direct corollary of the self-testing result is that the maximal violation of $\langle \mathcal{F}_3\rangle$ certifies optimal local randomness and the maximal violation of $\langle \bar{\mathcal{F}}_3\rangle  = \langle \mathcal{F}_3\rangle + \langle A_0\otimes B_d\rangle$  certifies optimal global randomness (two random trits).

\section{Conclusions and outlook}\label{sec:conc}

In this work, we provide two families of Bell inequalities that allow to certify optimal global randomness using maximally entangled states of arbitrary local dimension. To obtain the first family, defined in the $(2,d+1;d,d)$ scenario, we exploited an approach developed in Ref.~\cite{dhara2013} which allows to conclude optimal randomness certification from symmetries of the Bell expression and the uniqueness of its maximal violation. That approach, initially presented for Bell scenarios with binary measurements, is extended here to the case of an arbitrary number of outcomes. We then show numerically that this family enables the certification of 
$2\log_2 d$ bits of randomness for $d=3,4$. For $d=3$, our inequality proves to be more resistant to noise than previously known examples when it comes to certify randomness beyond the two-bit barrier defined by the projective qubit protocols \cite{Barcelona2}. This, together with the fact that it requires the minimal number of measurements on one of the parties, make this construction potentially appealing from an experimental point of view. Finally, it is worth noting that the maximal quantum value of this functional is attained by a strategy in which the parties share a maximally entangled state of two qudits and one of them performs generalized $X$ and $Z$ Pauli measurements.

Based on the above construction, we then proposed a second family of Bell inequalities in the $(d,d;d,d)$ scenario, for which an optimal quantum violation and its realization can be analytically obtained. This realization requires the maximally entangled state and generalized $X$ and $Z$ among the measurements of one of the parties. Moreover, for the $d=3$ case, we proved that the maximal violation of this inequality self-tests, similarly to Ref. \cite{Kaniewski_2019}, the maximally entangled state and three mutually unbiased bases. This automatically proves the DI certification of optimal local randomness and, extending the scenario by adding a measurement on one of the parties, it allows to certify optimal global randomness from projective measurements.

Several questions remain open. First, an analytical proof that the new inequality certifies maximal global randomness for projective measurements for arbitrary $d$ remains to be found. It would also be interesting to establish self-testing statements for the introduced inequalities, in particular for the one admitting an SOS decomposition at the first level of the NPA hierarchy. On the one hand, this would provide a way to prove that the inequality certifies the maximal amount of $2\log_2 d$ bits for any $d$; on the other hand, it would enable certification the $X$ and $Z$ observables in arbitrary dimension, solving an open problem in the field. Third, it remains open whether the new inequality is optimal in terms of noise robustness or whether better inequalities exist in this respect. Finally, possible generalizations to the multipartite case also remain to be explored.

\section{Acknowledgments}\label{sec:ack}
This work was supported by the European Union’s Horizon Europe research and innovation programme under grant agreement No 101080086/NeQST,  the Government of Spain (Severo Ochoa CEX2019-000910-S and FUNQIP), Fundació Cellex, Fundació Mir-Puig, Generalitat de Catalunya (CERCA program), the EU (QSNP, 101114043, Quantera COMPUTE No 101017733) and the AXA Chair in Quantum Information Science. RD acknowledges funding from the European Union’s Horizon Europe research and innovation programme under the Marie Skłodowska-Curie grant agreement No. 101081441. The Center for Quantum Enabled Computing project is carried out within the International Research Agendas programme of the Foundation for Polish Science co-financed by the European Union under the European Funds for Smart Economy 2021-2027 (FENG). 

\section{Note added}

Notice that similar results were
independently developed in \cite{Mate} where a device-independent scheme for maximal global randomness certification from projective measurements was provided.

\bibliography{references}

\newpage
\appendix
\onecolumngrid
\newpage

\section{Relevant bounds for $\mathcal{I}_d$}
\label{app:ibounds}

\subsection{Optimal classical and non-signaling values for $\mathcal{I}_d$}
\begin{prop}
The non-signaling bound of $\mathcal{I}_d$ is:
\begin{equation}
        \beta_\mathcal{NS}(d) = 2d ,
\end{equation}
and the classical bound is given by:
\begin{equation}
           \beta_\mathcal{L}(d) = 2\left[ 1 + 2 \sum_{\ell=1}^{\lfloor \frac{d+1}{2}\rfloor - 1} \cos\left( \frac{\pi \ell }{d} \right) f_\ell \right] , 
\end{equation}
where $ f_\ell = \cos (\pi/d)$ if $\ell$ is odd and $f_\ell = 1$ otherwise. 
\label{prop:LandNSbounds}
\end{prop}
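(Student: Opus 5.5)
For the non-signalling bound, I will first rewrite the functional as $\langle\mathcal{I}_d\rangle=\sum_{x=0}^{1}\sum_{y=0}^{d-1}\mathrm{Re}\,\big(\omega^{xy}\langle A_xB_y\rangle\big)$. Here $\langle A_xB_y\rangle=\sum_{ab}\omega^{a+b}p(ab|xy)$ is a convex combination of roots of unity, so each term is at most $1$. This gives $\beta_{\mathcal{NS}}(d)\le 2d$ for any behaviour, not only non-signalling ones. For attainability I will use the generalized PR box
\begin{equation}
p(ab|xy)=\tfrac{1}{d}\,\delta_{a+b\equiv -xy \ (\mathrm{mod}\ d)} .
\end{equation}
Its marginals are uniform, so it is non-signalling. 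It satisfies $\langle A_xB_y\rangle=\omega^{-xy}$, so every term equals $1$ and the value is $2d$.

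For the local bound, it suffices to maximize over deterministic strategies $A_x=\omega^{a_x}$, $B_y=\omega^{b_y}$, because the functional is linear and the local polytope is the convex hull of these. Using the symmetry $\mathcal{T}'$ (a global phase on Alice combined with the opposite phase on Bob), I can set $a_0=0$ and write $a_1=c$. The value then becomes $\sum_{y}\mathrm{Re}\big[\omega^{b_y}(1+\omega^{c+y})\big]$. Bob optimizes each $b_y$ independently. As $y$ runs over $\mathbb{Z}_d$, $j=c+y$ also runs over all of $\mathbb{Z}_d$, so the optimum does not depend on $c$ and equals
\begin{equation}
\beta_\mathcal{L}(d)=\sum_{j=0}^{d-1}\max_{b\in\mathbb{Z}_d}\mathrm{Re}\big[\omega^{b}(1+\omega^{j})\big].
\end{equation}
Next I will write $1+\omega^j=2\cos(\pi j/d)\,e^{i\pi j/d}$. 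The inner maximum is $|1+\omega^j|$ times the cosine of the distance from the phase of $1+\omega^j$ to the nearest multiple of $2\pi/d$. If that phase, after absorbing a possible sign of $\cos(\pi j/d)$ as an extra $\pi$, is an even multiple of $\pi/d$, the term equals $2|\cos(\pi j/d)|$. If it is an odd multiple, the term equals $2|\cos(\pi j/d)|\cos(\pi/d)$.

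The remaining step is bookkeeping, and I expect it to be the main obstacle. I need to group $j$ with $d-j$, which have equal $|\cos(\pi j/d)|$, and track the parities of the effective phases. I will do this separately for $d$ even and $d$ odd. For $d$ odd, the extra phase $\pi=d\pi/d$ flips the parity. For $d$ even, the term $j=d/2$ vanishes. In each case the sum should collapse into $2\big[1+2\sum_{\ell}\cos(\pi\ell/d)f_\ell\big]$, with $\ell$ running up to $\lfloor (d+1)/2\rfloor-1$ and $f_\ell$ set by parity. I will reindex $\ell$ as the distance of the relevant angle from the nearest aligned root. As a sanity check, I will verify $d=2,3$ against the known values $2$ (CHSH) and the local bound of $\mathcal{I}_3$. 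Finally, I will confirm that the per-$y$ choices of $b_y$ are simultaneously realizable, which holds because they are independent.
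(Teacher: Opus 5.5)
Your proposal is correct and follows the paper's proof essentially step by step. For $\beta_{\mathcal{NS}}$ you bound each correlator term by $1$ and saturate it with the PR-type box $a+b\equiv -xy$; you state the trivial upper bound explicitly, which the paper leaves implicit. For $\beta_{\mathcal{L}}$ you reduce to deterministic strategies, absorb $a_0,a_1$ by relabeling and shifting the summation index, and optimize each $b_y$ independently using the same sign-and-parity case analysis and the pairing $j\leftrightarrow d-j$.

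For the bookkeeping, take $\ell=\min(j,d-j)$. For $j>d/2$ the phase is $(j+d)\pi/d$, and the multiple $j+d=2d-\ell$ has the parity of $\ell$ whether $d$ is even or odd. So $f_\ell$ is fixed by the parity of $\ell$ and you need no separate cases.
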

\begin{proof}
The Bell operator can be written as:
\begin{equation}
    \mathcal{I}_d \equiv \frac{1}{2}\sum_{y=0}^{d-1}  \left( A_0 + \omega^y A_1 \right) \otimes B_y^\dagger + \text{h.c.,}
\end{equation}
and the local deterministic strategies correspond to set $\langle A_x \rangle = \omega^{a_x}$, $\langle B_y \rangle = \omega^{b_y}$ and $\langle A_x B_y \rangle = \omega^{a_x+b_y}$, where $a_x,b_y \in \{0,\dots,d-1\}$ are the deterministic outputs that each party observe when they choose inputs $x$ and $y$ respectively. The mean value of the Bell operator, evaluated in a local deterministic strategy is then:
\begin{equation}
\begin{split}
    \langle \mathcal{I}_d\rangle &= \frac{1}{2}\sum_{y=0}^{d-1}\left( 1 + \omega^{y+a_1-a_0}\right) \omega^{a_0+b_y} + c.c. \\  &= 2\sum_{y=0}^{d-1}\cos \left[\frac{\pi}{d}(y+a_1-a_0)\right] \cos\left[\frac{\pi}{d}(y+a_1-a_0)+\frac{2\pi}{d}(a_0+b_y)\right] .
\end{split}
\end{equation}
From the last expression it can be seen that one can fix $a_0 = a_1 = 0$ without losing generality, as different values of $a_x$ can be absorbed into Bob's variables by means of the relabeling $b_y \rightarrow b_y+a_0$ and changing the index of the sum. So, the problem reduces to find the set $\{b_y\}_{y=0,\dots,d-1}$ that maximizes:
\begin{equation}
    \langle \mathcal{I}_d\rangle = 2\sum_{y=0}^{d-1}\cos \left(\frac{\pi y}{d}\right) \cos\left[\frac{\pi}{d}\left(y+2b_y\right)\right] .
\end{equation}
Now, as each term on the previous sum depends on a single $b_y$, the problem reduces to picking them in such a way that each term is maximized independently: if $\cos \left(\frac{\pi y}{d}\right) >0$ and $y$ is even, we can pick $b_y = -y/2 \mod d$ which gives $\cos\left[\frac{\pi}{d}\left(y+2b_y\right)\right] = 1$. On the other hand, if $\cos \left(\frac{\pi y}{d}\right) >0$ and $y$ is odd, we can pick $b_y = (1-y)/2\mod d $ which gives $\cos\left[\frac{\pi}{d}\left(y+2b_y\right)\right] = \cos\left(\pi / d\right)$. The terms with $\cos \left(\frac{\pi y}{d}\right) <0$ are optimized in a similar fashion. Note also that, except for the first term (which is equal to $1$), all remaining terms come in pairs, each of them corresponding to each class $\cos \left(\frac{\pi y}{d}\right) >0$ and $\cos \left(\frac{\pi y}{d}\right) <0$ (in the even $d$ case, there is a term with $\cos \left(\frac{\pi y}{d}\right)  =0$ which is irrelevant). Everything put together, we arrive to the desired formula:
\begin{equation}
    \max_{p\in\mathcal{L}} \, \langle \mathcal{I}_d\rangle = 2\left[ 1 + 2 \sum_{\ell=1}^{\lfloor \frac{d+1}{2}\rfloor - 1} \cos\left( \frac{\pi \ell }{d} \right) f_\ell \right] , 
\end{equation}
where $f_\ell = 1$ if $\ell$ is even and $f_\ell = \cos(\pi/d)$ otherwise. 

For the non-signaling bound, we just write:
\begin{equation}
    \mathcal{I}_d \equiv \frac{1}{2}\sum_{y=0}^{d-1}  \left( A_0B_y + \omega^y A_1B_y \right) + \text{h.c.,}
    \label{eq:iforns}
\end{equation}
and note that nonlocal vertices of the non-signaling polytope correspond to $\langle A_x \rangle = 0$, $\langle B_y \rangle = 0$ and $\langle A_x B_y \rangle = \omega^{g_{xy}}$, which are PR-like non-local deterministic strategies such that $a+b = g_{xy}  \mod d$. This means that with non-signaling resources we can make the expectation value of each two-body correlator in \eqref{eq:iforns} equal to one and therefore $\max_{p\in\mathcal{NS}}  \,  \langle \mathcal{I}_d\rangle = 2d$. 

\end{proof}


\subsection{Upper bound for the optimal quantum value of $\mathcal{I}_d$}
\begin{prop}
The quantum bound of $\mathcal{I}_d$ satisfies:
\begin{equation}
        \beta_\mathcal{Q}(d) \leq d\sqrt{2} .
\label{eq:qbound}
\end{equation}
This upper bound is, in general, not tight.
\label{prop:qbound}
\end{prop}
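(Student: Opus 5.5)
We bound $\langle\mathcal{I}_d\rangle$ for an arbitrary state $\ket{\psi}$ and arbitrary unitary observables $A_x,B_y$ whose eigenvalues are $d$-th roots of unity, using only the Cauchy--Schwarz inequality and unitarity. As in the proof of Proposition~\ref{prop:LandNSbounds}, we first regroup the Bell operator by Bob's settings:
\begin{equation}
\langle \mathcal{I}_d\rangle = \mathrm{Re}\sum_{y=0}^{d-1} \bra{\psi} \left(A_0+\omega^y A_1\right)\otimes B_y \ket{\psi} .
\end{equation}
The idea is to separate Alice's combination $A_0+\omega^yA_1$ from $B_y$. Then we use that each $B_y$ is unitary, so $\|(\mathbbm{1}\otimes B_y^\dagger)\ket{\psi}\|=1$.

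The key steps, in order, are as follows.

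\begin{enumerate}
\item Apply Cauchy--Schwarz to each term:
\begin{equation}
\left|\bra{\psi}(A_0+\omega^yA_1)\otimes B_y\ket{\psi}\right| \leq \left\|(A_0+\omega^yA_1)^\dagger\otimes\mathbbm{1}\ket{\psi}\right\| .
\end{equation}
\item Expand the square of the right-hand side using $A_x^\dagger A_x=\mathbbm{1}$:
\begin{equation}
\left\|(A_0+\omega^yA_1)^\dagger\otimes\mathbbm{1}\ket{\psi}\right\|^2 = 2+2\,\mathrm{Re}\left(\omega^{y}\langle A_0A_1^\dagger\rangle\right).
\end{equation}
\item Write $c=\langle A_0A_1^\dagger\rangle$ and apply Cauchy--Schwarz once more, this time over the index $y$:
\begin{equation}
\sum_y\sqrt{2+2\,\mathrm{Re}(\omega^y c)} \leq \sqrt{d}\,\sqrt{\sum_y\left[2+2\,\mathrm{Re}(\omega^yc)\right]} .
\end{equation}
\item Use $\sum_{y=0}^{d-1}\omega^y=0$ for $d\geq2$. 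The $c$-dependent terms cancel, the sum under the square root equals $2d$, and therefore $\langle\mathcal{I}_d\rangle\leq\sqrt{d}\sqrt{2d}=d\sqrt2$.
\end{enumerate}

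For the claim that the bound is not tight in general, a counterexample suffices. For $d=3$, the NPA level $1+AB$ result quoted in the main text gives $\langle\mathcal{I}_3\rangle\leq4<3\sqrt2$. For $d=2$ the bound equals $2\sqrt2$ and is attained by the CHSH strategy, so non-tightness only appears for larger $d$. One can also give an analytic reason. Saturating step~3 requires all the norms $2+2\,\mathrm{Re}(\omega^yc)$ to be equal, which forces $c=0$. Saturating step~1 requires $(\mathbbm{1}\otimes B_y^\dagger)\ket{\psi}\propto(A_0+\omega^yA_1)^\dagger\otimes\mathbbm{1}\ket{\psi}$ for every $y$. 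Generically these conditions are incompatible with the $B_y$ having spectrum in the $d$-th roots of unity, which is what the NPA value shows for $d=3$.

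The main obstacle is only the modest one of organizing the two Cauchy--Schwarz steps so that the cross terms cancel exactly through $\sum_y\omega^y=0$. The tightness discussion rests on the numerical evidence already reported in the paper, so I would not attempt a fully analytic proof of non-tightness for every $d$.
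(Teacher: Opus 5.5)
Your proof of the bound is correct. It reaches $d\sqrt2$ by a state-level route, whereas the paper works at the operator level.

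The paper exhibits the sum-of-squares identity $\sum_{\ell}P_\ell^\dagger P_\ell=2d\,\mathbbm{1}-\sqrt2\,\mathcal{I}_d$ with $P_\ell=\mathbbm{1}-\frac{1}{\sqrt2}(A_0+\omega^\ell A_1)\otimes B_\ell$. There the cross terms between $A_0$ and $A_1$ cancel through $\sum_\ell\omega^{\ell}=0$, exactly as your $c$-terms do.

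The two proofs are mathematically equivalent:
\begin{itemize}
\item Evaluating $\|P_\ell\ket{\psi}\|^2\ge0$ is the elementary estimate $2\,\mathrm{Re}\langle u,v_\ell\rangle\le\lambda\|u\|^2+\lambda^{-1}\|v_\ell\|^2$, with $u=\ket{\psi}$, $v_\ell=(A_0+\omega^\ell A_1)\otimes B_\ell\ket{\psi}$ and the fixed weight $\lambda=\sqrt2$.
\item Your chain of two Cauchy--Schwarz steps ends at $\langle\mathcal{I}_d\rangle\le\sqrt{d\sum_y\|\cdot\|^2}$. This is exactly what that weighted estimate gives once a single common weight is optimized.
\item Because $\sum_y\|\cdot\|^2=2d$ identically, for either way of splitting the vectors, the optimal weight is the constant $\sqrt2$. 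That is why the paper can hard-wire it into an operator identity.
\end{itemize}

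Each version has its own advantage:
\begin{itemize}
\item Yours needs no guessed weight. It also displays the equality conditions directly: equal norms, which force $\langle A_0A_1^\dagger\rangle=0$ for $d\ge3$, together with alignment of the vectors in step~1.
\item The paper's gives a state-independent operator inequality $\sqrt2\,\mathcal{I}_d\le2d\,\mathbbm{1}$ built only from first-level operators. This is consistent with NPA level~1 returning exactly $d\sqrt2$. It also phrases saturation as $P_\ell\ket{\psi}=0$, the natural input for a rigidity or self-testing analysis.
\end{itemize}

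Two minor remarks:
\begin{itemize}
\item With $c=\langle A_0A_1^\dagger\rangle$, the cross term in your step~2 is $2\,\mathrm{Re}(\omega^{-y}c)$, not $2\,\mathrm{Re}(\omega^{y}c)$. This is harmless, since $\sum_y\omega^{-y}=0$ as well.
\item Your ``analytic reason'' for non-tightness is only a heuristic, not an argument. The non-tightness claim therefore rests, exactly as in the paper, on the NPA value $\langle\mathcal{I}_3\rangle\le4<3\sqrt2$.
\end{itemize}
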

\begin{proof}
We define the operators:
\begin{equation}
    P_\ell \equiv \mathbbm{1} - \frac{A_0+\omega^\ell A_1} {\sqrt{2}} \otimes B_\ell ,
\end{equation}
which satisfy:
\begin{equation}
    P_\ell^\dagger P_\ell = \mathbbm{1} - \frac{A_0+\omega^\ell A_1} {\sqrt{2}}\otimes B_\ell +\frac{\omega^{-\ell}A_0 ^\dagger A_1}{2}\otimes\mathbbm{1} + \text{h.c.} ,
\end{equation}
and hence:
\begin{equation}
    \sum_{\ell=0}^{d-1} P_\ell^\dagger P_\ell = d\, \mathbbm{1} - \frac{1}{\sqrt2} \sum_{\ell=0}^{d-1}\left(A_0+\omega^\ell A_1\right)\otimes B_\ell + \frac{A_0 ^\dagger A_1}{2}\otimes \sum_{\ell=0}^{d-1} \omega^{-\ell}\mathbbm{1} + \text{h.c.} =  2d\, \mathbbm{1} - \sqrt2 \,\mathcal{I}_d ,
    \label{eq:sosId}
\end{equation}
where, in the last identity, we used that $\sum_{\ell=0}^{d-1} \omega^{-\ell}=0$. Being a sum of squares, we know that $\sum_{\ell=0}^{d-1} P_\ell^\dagger P_\ell$ is a semi-definite positive operator, so we conclude that for any quantum strategy:
\begin{equation}
    \langle \mathcal{I}_d \rangle \leq \sqrt2 d ,
    \label{eq:upperboundId}
\end{equation}
which is the desired bound.

\end{proof}
This upper bound is only tight for $d=2$, and the level at which the NPA relaxation becomes tight increases with the dimension (see Appendix \ref{app:numericalbounds} for details).

\subsection{Numerical bounds for $\mathcal{I}_d$}
\label{app:numericalbounds}

The following table summarizes, for $d\leq 6$, numerical upper bounds (found by means of NPA relaxations) and lower bounds (by optimizing over explicit quantum strategies using a see-saw algorithm) for $\mathcal{I}_d$ and compares them to the analytical upper bound \eqref{eq:qbound} and the conjectured (and also analytical) optimal quantum value \eqref{eq:tsir_bound}. Note that the upper bound \eqref{eq:qbound} is tight for level $1$ of NPA in all cases, which is expected as the SOS decomposition used to prove it only involves level $1$ terms. Also note that the only dimension in which this upper bound matches the optimal quantum value is $d=2$, for which the inequality reduces to CHSH.

\begin{center}
\begin{table}[h]
\begin{tabular}{|c|c|c|c|c|c|c|}
\hline
\textbf{\,\,\,} & \makecell{\textbf{Upper} \\ \textbf{bound \eqref{eq:qbound}}}  & \makecell{\textbf{Conjectured optimal} \\ \textbf{quantum value \eqref{eq:tsir_bound}}}  & \makecell{\textbf{NPA} \\ \textbf{(level 1)}} &\makecell{\textbf{NPA } \\ \textbf{(level 1+AB)}} & \makecell{\textbf{NPA} \\ \textbf{(level 2)}} & \makecell{\textbf{Lower bound} \\ \textbf{(see-saw)}}  \\
\hline
$\mathcal{I}_2$ & $2\sqrt2 = 2.828\dots$   & $2 \left[ \sin \left(\frac{\pi}{4} \right)  \right] ^{-1} = 2.828\dots$ & $2.828\dots$   & - & - &  $2.828\dots$    \\
$\mathcal{I}_3$ & $3\sqrt2 = 4.242\dots$  & $2 \left[ \sin \left(\frac{\pi}{6} \right)  \right] ^{-1} = 4 \quad \quad\quad \; $  & $4.242\dots$ & $4.000\dots$  & - &  $4.000\dots$  \\
$\mathcal{I}_4$ & $4\sqrt2 = 5.656\dots$ &$2 \left[ \sin \left(\frac{\pi}{8} \right)  \right] ^{-1} = 5.226\dots$  & $5.656\dots$ & $5.226\dots$  & - &  $5.226\dots$   \\
$\mathcal{I}_5$ & $5\sqrt2 = 7.071\dots$ & $2 \left[ \sin \left(\frac{\pi}{10} \right)  \right] ^{-1} =6.472\dots $ & $7.071\dots$ & $6.478\dots$ & $6.472\dots$ &  $6.472\dots$  \\
$\mathcal{I}_6$ & $6\sqrt2 = 8.485\dots$ & $2 \left[ \sin \left(\frac{\pi}{12} \right)  \right] ^{-1} =7.727\dots $ & $8.485\dots$ & $7.745\dots$ & $7.727\dots$ &  $7.727\dots$  \\
\hline
\end{tabular}
\caption{Summary of analytical and numerical bounds for $\mathcal{I}_d$. All numerical results (last five columns) are truncated to ease readability, but matching values differ from their analytical counterparts by less than $10^{-6}$ in all cases.}
\label{tab:num_res}
\end{table}
\end{center}

\subsection{Ratio between local and quantum optimal values for $\mathcal{I}_d$}
\label{app:ratio}

\begin{figure}
    \centering
    \includegraphics[width=0.5\linewidth]{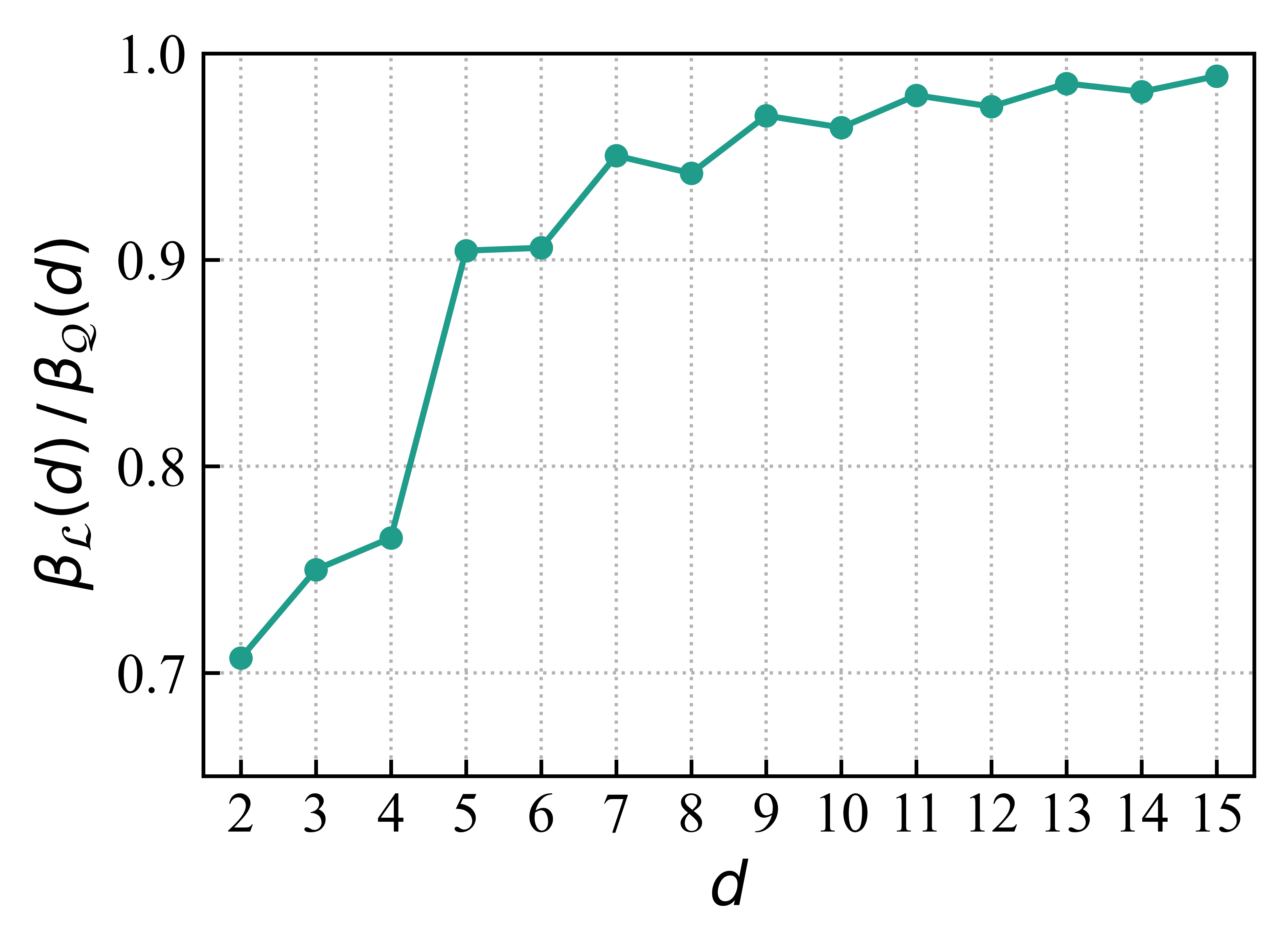}
    \caption{Ratio between the local and quantum bounds of $\mathcal{I}_d$ as a function of the dimension $d$.}
    \label{fig:lboundoverqbound}
\end{figure}

Figure \ref{fig:lboundoverqbound} shows the ratio $\beta_\mathcal{L}(d)/\beta_\mathcal{Q}(d)$ for $d\leq 15$. For low dimensions, our construction exhibits a wide range of nonlocal values: it allows to certify nonlocality in presence of up to $5 \%$ of depolarizing noise up to $d=9$. In particular, for $d=3$, nonlocality can be witnessed even against $25 \%$ of depolarizing noise.

\newpage

\section{Device-independent guessing probability as a semi-definite program and numerical certification.}
\label{app:randsdp}

Consider a semi-definite program (SDP) in the standard canonical form
\begin{align}
	\max_{X} \quad & \langle C, X \rangle \\
	\text{s.t.} \quad & \langle A_i, X \rangle = b_i \quad \text{for all } i, \\
	& X \succeq 0,
\end{align}
where $X$ is a symmetric positive semi-definite matrix, $\langle A, B \rangle = \mathrm{Tr}(A^\top B)$ denotes the matrix inner product, and all constraints are linear in $X$~\cite{mironowicz2024semi}.

We use the Navascu\'es--Pironio--Ac\'in (NPA) hierarchy to approximate the set of quantum correlations. One introduces a moment matrix $\Gamma$ indexed by words in measurement operators, with entries given by expectations of operator products, $\Gamma_{u,v} = \langle \psi \mid u^\dagger v \mid \psi \rangle$. The matrix satisfies $\Gamma \succeq 0$, and linear constraints encode normalization, orthogonality of projectors, and commutation relations between Alice and Bob.

The hierarchy is defined by restricting the set of words. At level~1 one includes words of length at most one, while level~$1+\mathrm{AB}$ additionally includes products of one Alice and one Bob operator. Higher levels include longer words, such as $\mathrm{AAB}$ or $\mathrm{ABB}$, and provide tighter approximations to the quantum set.

Given a Bell functional $F$, we express it as a linear combination of probabilities $P(a,b\mid x,y)$, each represented as a linear function of entries of $\Gamma$. We then solve the SDP maximizing $F(\Gamma)$ subject to $\Gamma$ belonging to the chosen NPA level. The optimal value gives an upper bound on the quantum value, i.e.\ the Tsirelson bound.

From the optimal moment matrix $\Gamma^\ast$ one reconstructs the corresponding distribution $P^\ast(a,b\mid x,y)$ by reading the appropriate entries of $\Gamma^\ast$.

To compute the guessing probability we use the method of Nieto-Silleras based on decomposing the observed distribution into a convex mixture of strategies corresponding to the possible outcomes guessed by the eavesdropper~\cite{nieto2014using,bancal2014more}. In our notation this amounts to introducing a family of moment matrices $\Gamma^{\alpha\beta}$, each associated with the strategy in which Eve guesses the pair of outcomes $(\alpha,\beta)$. Each matrix $\Gamma^{\alpha\beta}$ satisfies the NPA constraints at the chosen relaxation level, in particular $\Gamma^{\alpha\beta}\succeq0$ together with all linear constraints imposed by the operator relations. In addition, we impose the reconstruction condition
\begin{equation}
    \label{eq:GP_reconstruction}
    \sum_{\alpha,\beta}\Gamma^{\alpha\beta} = \Gamma,
\end{equation}
which at the level of probabilities corresponds to
\begin{equation}
    \label{eq:GP_prob_reconstruction}
    \sum_{\alpha,\beta} P^{\alpha\beta}(a,b|x,y) = P(a,b|x,y).
\end{equation}

For fixed inputs $x,y$, the global guessing probability is defined as
\begin{equation}
    \label{eq:global_guessing_probability}
    G(AB|x,y) = \max_{\{\Gamma^{\alpha\beta}\}} \sum_{\alpha,\beta} P^{\alpha\beta}(\alpha,\beta|x,y), 
\end{equation}
where the maximization is taken over all families of moment matrices $\{\Gamma^{\alpha\beta}\}$ satisfying the NPA constraints and the reconstruction condition~\eqref{eq:GP_reconstruction}. In addition, the reconstructed distribution is required to coincide with a target distribution or to satisfy a given Bell constraint, for example $F(\Gamma)=\beta$ or $F(\Gamma)\ge\beta-\varepsilon$. Since all probabilities
$P^{\alpha\beta}(a,b|x,y)$
are linear functions of the entries of the corresponding moment matrices $\Gamma^{\alpha\beta}$, the resulting optimization problem remains a semi-definite program in standard form. Solving this SDP yields the optimal quantum upper bound on the global guessing probability compatible with the observed Bell violation.

\subsection{Numerical proof of optimal randomness certification using $\tilde{\mathcal{I}}_d$ for $d=3$ and $d=4$}
\label{app:optrand34}

The following table summarizes numerical bounds for the device-independent guessing probability of both outputs when the full distribution $p(ab|xy)$ is fixed by the optimal quantum strategy defined in Conjecture \ref{conj:i}.
\begin{center}
\begin{table}[h]
\begin{tabular}{|c|c|c|c|c|}
\hline
$d$ & Target $p_{guess}$ & Target randomness & Upper bound for $p_{guess}$ & Lower bound for randomness \\
\hline
$3$ & $1/9 = 0.111\dots$ & $\log_2(9) \simeq 3.169$ & $0.11112$ & $3.169$ \\
\hline
$4$ & $1/16 = 0.0625$ & $\log_2(16) = 4$ & $0.06251$ & $3.999$\\
\hline
$5$ & $1/25 = 0.04$ & $\log_2(25) \simeq 4.644$ &  $0.04015$ & $4.638$\\
\hline

\end{tabular}
\caption{Summary of numerical bounds for the guessing probability and corresponding certifiable randomness. All values are computed using level $2$ of the NPA hierarchy.}
\label{tab:rand_num}
\end{table}
\end{center}

For $d=5$, computation took more than $26$ days, requiring up to 7.21 TiB of physical memory, so higher levels are not possible to run in our available hardware. It can also be seen that the level of NPA required to attain optimality is higher than the level at which the inequality becomes tight (for example, for $d=4$ the inequality is tight for level $1+AB$, although one needs to go to level $2$ to reproduce the results shown in Table \ref{tab:rand_num}). For those two reasons, optimality cannot be shown for $d=5$ up to numerical precision, but the partial results still indicate that our construction could certify optimal randomness in $d>4$.

\subsection{Randomness robustness for $\tilde{\mathcal{I}}_3$ and $\tilde{\mathcal{F}}_3$}
\label{app:rob}

We analyze the robustness to noise of the inequality $\tilde{\mathcal{I}}_3$ defined in Eq.~\eqref{eq:i3bar}, as well as $\tilde{\mathcal{F}}_3$. The latter is obtained from the family $\tilde{\mathcal{F}}_d$, is defined as a modification of $\mathcal{F}_d$ (Eq.~\eqref{eq:bell_general_expr}) in the same spirit as for $\tilde{\mathcal{I}}_3$: namely, we include an additional measurement setting for Bob and introduce the extra correlator $\langle A_0 B_d^\dagger \rangle$. The resulting Bell expression is
\begin{equation}
    \tilde{\mathcal{F}}_d = \frac{1}{2}\left[\sum_{x,y} \lambda_{0,x-1}^* \,\omega^{xy}\, \langle A_x B_y^\dagger \rangle + \langle A_0 B_d^\dagger \rangle \right] + \mathrm{h.c.}\, .
    \label{app:bell_general_expr}
\end{equation}

To assess noise robustness, we consider a Werner-type mixture. At the level of probability distributions, this corresponds to
\begin{equation}
	\label{eq:noise_distr}
	p_v = (1-v)\, p_{\text{Tsir}} + v\, p_{\text{unif}},
\end{equation}
where $p_{\text{Tsir}}$ denotes the distribution achieving the maximal quantum violation of the Bell inequality, $p_{\text{unif}}$ is the uniform distribution associated with the maximally mixed state, and $v \in [0,1]$ is the noise parameter.

In Fig.~\ref{fig:robustness}, we plot the certified randomness as a function of the noise parameter $v$, using the NPA hierarchy at level 2. We observe that the robustness to noise of the two inequalities is very similar.

\begin{figure}[h!]
    \centering
    \includegraphics[width=0.5\linewidth]{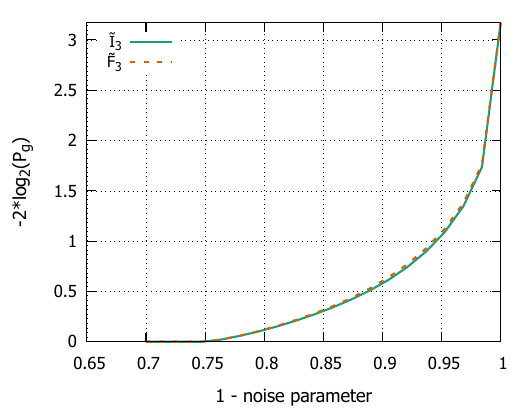}
    \caption{Certified randomness as a function of the noise parameter $v$ for $\tilde{\mathcal{I}}_3$ and $\tilde{\mathcal{F}}_3$ under the Werner-type noise model of Eq.~\eqref{eq:noise_distr}.}
    \label{fig:robustness}
\end{figure}

In~\cite{Barcelona2}, we perform a more systematic analysis of robustness, in which $\tilde{\mathcal{I}}_3$ is compared with other known Bell inequalities with the same number of outcomes. We find that, among inequalities that certify maximal randomness in the ideal (noise-free) scenario, $\tilde{\mathcal{I}}_3$ exhibits improved robustness to noise.

\section{Optimal quantum strategy for $\mathcal{I}_d$}
\label{app:optstrat}

\label{app:optimalconstruction}
Here we show that $B_y$ defined in \eqref{eq:optbob} with coefficients \eqref{eq:optcoeffs} are proper observables, i.e. that they are unitary and with eigenvalues equal to $d$-th roots of unity. To this end we write $B_y$ as a product of $X$ and a polynomial of $XZ$ and focus on the spectrum of the latter one.
For clarity we recall the form of observables
\begin{equation}
    B_y = \sum_{k=0}^{d-1} \lambda_{y,k} X^{k+1} Z^k ,
\end{equation}
where the full coefficients are
\begin{equation}
 \lambda_{y,k} = \frac{(-1)^k\omega^{\frac{k(k+1)}{2}}\omega^{-y(1+k)}}{d\sin\left(\frac{\pi}{d} (k+\frac{1}2{}) \right)} .
    \label{eq:optcoeffs}
\end{equation}

Using $ZX=\omega XZ$ can write
\begin{equation}
B_y = X \sum_{k=0}^{d-1} \lambda_{y,k}\omega^{-\frac{k(k-1)}{2}}(XZ)^k.
\end{equation}
We see that $B_y=XW_y$, where $W_y$ is defined as
\begin{equation}
\label{polynomial_xz}
    W_y=\sum_{k=0}^{d-1} \lambda_{y,k}\,\omega^{-\frac{k(k-1)}{2}}(XZ)^k,
\end{equation}which is a polynomial of $XZ$. To prove properties of $B_y$, we look at the spectrum of $W_y$ and at the action of $X$ on the eigenvectors $\{|u_j\rangle\}_{j\in\mathbb{Z}^d}$ of $XZ$. Note that
\begin{equation}
    (XZ)^d=\omega^{d(d-1)/2}\mathbbm{1}=(-1)^{d-1}\mathbbm{1},
\end{equation}
which means that $XZ$ has eigenvalues $\omega^j$ for odd $d$ and $\omega^{j+\frac{1}{2}}$ for even $d$ ($j\in\mathbb{Z}^d$). Using $ZX=\omega XZ$ we observe that
\begin{equation}
    (XZ)X|u_j\rangle=\omega X(XZ|u_j\rangle)=
    \begin{cases}
        \omega^{j+1}X|u_j\rangle,&d \text{ is odd,}\\
        \omega^{j+\frac{3}{2}}X|u_j\rangle,&d\text{ is even.}
    \end{cases}
\end{equation}
Therefore, regardless of dimension, $X|u_j\rangle$ is an eigenvector of unitary operator $XZ$ and it is exactly $|u_{j+1}\rangle$ up to a phase factor. We have
\begin{equation}
    X|u_j\rangle=e^{i\xi_j}|u_{j+1}\rangle.
\end{equation}
Property $X^d=\mathbbm{1}$ implies that phase factors $e^{i\xi_j}$ multiply to one as
\begin{equation}
    |u_j\rangle =X^d|u_j\rangle=\left(\prod_{i=j}^{j+d-1} e^{i\xi_i}\right)|u_{j}\rangle.
\end{equation}
Let $W_y|u_j\rangle=w_{y,j}|u_j\rangle$, which we can write based on \eqref{polynomial_xz}. Then
\begin{equation}
    B_y |u_j\rangle = XW_y|u_j\rangle=w_{y,j}X|u_j\rangle=w_{y,j}e^{i\xi_j}|u_{j+1}\rangle.
\end{equation}
Repeating this $d$ times gives
\begin{equation}
\left(B_y\right)^d |u_j\rangle= \left(\prod_{i=j}^{j+d-1}w_{y,i}\right)\left(\prod_{i=j}^{j+d-1} e^{i\xi_i}\right)|u_j\rangle=\left(\prod_{i=0}^{d-1}w_{y,i}\right)|u_j\rangle,
\label{eq:eigB}
\end{equation}
where we use $\prod_{i=0}^{d-1} e^{i\xi_i}=1$. Shifting summation bounds is irrelevant. To calculate the expression above we have to find eigenvalues $\{w_{y,i}\}_{i\in \mathbb{Z}^d}$ of $W_y$, which are different depending on the parity of $d$.

\begin{itemize}
    \item \textbf{Odd $d$ case} 
\end{itemize}
Eigenvalues of $XZ$ are $\omega^j$, $j\in\mathbb{Z}^d$. Therefore using \eqref{eq:optcoeffs} we explicitly write the spectrum of $W_y$ as
\begin{equation}
\sigma(W_y) = \left\{w_{y,j}=\sum_{k=0}^{d-1} \frac{(-1)^k \omega^{-y(1+k)}}{d \sin \frac{\pi}{d} (k+\frac{1}{2})} \omega^{k}\omega^{jk}, j\in\{0, 1, \dots, d-1\} \right\}.
\label{spectrum_p_odd}
\end{equation}
Recall the Dirichlet kernel \begin{equation}
    D_M(x) = \sum_{m=-M}^{M} e^{imx} = \frac{\sin\left(\left(M+\frac{1}{2}\right)x\right)}{\sin(x/2)}
\end{equation}
Put $M = \frac{d-1}{2}$ and $x= \frac{\pi}{d}(2k+1)$
\begin{equation}
        D_{\frac{d-1}{2}}\left(\frac{\pi}{d}(2k+1)\right) = \frac{\sin\left(\frac{\pi}{2}(2k+1)\right)}{\sin\left(\frac{\pi}{2d}(2k+1)\right)} = \frac{(-1)^k}{\sin\left(\frac{\pi}{2d}(2k+1)\right)}= \sum_{m=-\frac{d-1}{2}}^{\frac{d-1}{2}} e^{im \left(\frac{\pi}{d}(2k+1)\right)}.
\end{equation}
We place into the formula from \eqref{spectrum_p_odd} 
\begin{equation}
   w_{y,j}= \sum_{k=0}^{d-1} \frac{1}{d} \omega^{-y(1+k)} \omega^{k(j+1)} \sum_{m=-\frac{d-1}{2}}^{\frac{d-1}{2}} e^{im \left(\frac{\pi}{d}(2k+1)\right)}.\end{equation}
   We now write $e^{im \frac{\pi}{d}(2k+1)} = \omega^{mk} e^{im \frac{\pi}{d}}$ and change the order of summation.
\begin{equation}
        w_{y,j}= \frac{\omega^{-y}}{d} \sum_{m=-\frac{d-1}{2}}^{\frac{d-1}{2}} e^{im \frac{\pi}{d}} \sum_{k=0}^{d-1} \omega^{k(j+1-y+m)}
\end{equation}
Note that, while the last term vanishes unless $m\equiv y-j-1 \pmod{d}$, $y-j-1$ may not be in the domain of $m$. Therefore
 \begin{equation}
 \label{symmetric_modulo}
     w_{y,j}= \omega^{-y}e^{i\frac{\pi}{d}(y-j-1\text{ smod } d)},
 \end{equation}
 where we introduce symmetrized modulo addition
 \begin{equation}
     a\text{ smod }d\coloneq (a\text{ mod }d)-\frac{d-1}{2},
 \end{equation}
 which is a standard modulo with results shifted by $-(d-1)/2$. Now we calculate the product of eigenvalues
 \begin{equation}
    \label{polynomial_eigen_odd}
    \prod_{k=0}^{d-1}w_{y,k}=\prod_{k=0}^{d-1}\omega^ye^{i\frac{\pi}{d}(y-k-1\text{ smod }{d})}=(\omega^d)^y\text{exp}\left(i\frac{\pi}{d}\sum_{k=0}^{d-1}\left[(y-k-1) \text{ smod } d\right]\right).
\end{equation}
Note that $\sum_{k=0}^{d-1}\left[(y-k-1) \text{ smod } d\right]$ is a sum of integers from $-\frac{d-1}{2}$ to $\frac{d-1}{2}$, which is simply zero. Also $\omega^d=1$, which gets us to
\begin{equation}
    \prod_{k=0}^{d-1}w_{y,k}=1,
\end{equation}
which by virtue of \eqref{eq:eigB} means that $\left(B_y\right)^d =\mathbbm{1}$.

\begin{itemize}
    \item \textbf{{Even $d$ case}}
\end{itemize} 
Now eigenvalues of an operator $XZ$ are $\omega^{j+\frac{1}{2}}$. The spectrum
\begin{equation}
\sigma(W_y) = \left\{w_{y,j}=\sum_{k=0}^{d-1} \frac{(-1)^k \omega^{-y(1+k)}}{d \sin \frac{\pi}{d} (k+\frac{1}{2})} \omega^{(j+\frac{3}{2})k}, j\in\{0, 1, \dots, d-1\} \right\}.
\end{equation}
We use the same Dirichlet kernel as above, while shifting summation bounds as $\frac{d-1}{2}$ is not an integer.
\begin{equation}
        D_{\frac{d-1}{2}}\left(\frac{\pi}{d}(2k+1)\right) = \frac{(-1)^k}{\sin\left(\frac{\pi}{2d}(2k+1)\right)}= \sum_{m=-\frac{d-1}{2}}^{\frac{d-1}{2}} e^{im \left(\frac{\pi}{d}(2k+1)\right)}=\sum_{m=-\frac{d}{2}}^{\frac{d}{2}-1} e^{i(m+\frac{1}{2}) \left(\frac{\pi}{d}(2k+1)\right)}.
\end{equation}
Now we replace the kernel into the formula
\begin{equation}
   w_{y,j}= \sum_{k=0}^{d-1} \frac{1}{d} \omega^{-y(1+k)} \omega^{k(j+\frac{3}{2})} \sum_{m=-\frac{d}{2}}^{\frac{d}{2}-1} e^{i\frac{\pi}{d}(m+\frac{1}{2}) \left(2k+1\right)}.\end{equation}
Using $e^{i\frac{\pi}{d}(m+\frac{1}{2})(2k+1)} = \omega^{(m+\frac{1}{2})k} e^{i \frac{\pi}{d}(m+\frac{1}{2})}$ we get
\begin{equation}
\label{m_domain}
        w_{y,j}= \frac{\omega^{-y}}{d} \sum_{m=-\frac{d}{2}}^{\frac{d}{2}-1} e^{i \frac{\pi}{d}(m+\frac{1}{2})} \sum_{k=0}^{d-1} \omega^{k(j+\frac{3}{2}+\frac{1}{2}-y+m)}.
\end{equation}
The sum over $k$ vanishes unless $m\equiv y-j-2 \pmod{d}$. Here we need modulo shifted by $-\frac{d}{2}$ to match the domain of $m$.
 \begin{equation}
 \label{symmetric_modulo2}
     w_{y,j}= \omega^{-y}e^{i\frac{\pi}{d}\left[(y-j-2\text{ mod } d)-\frac{d}{2}+\frac{1}{2}\right]}=\omega^{-y}e^{i\frac{\pi}{d}(y-j-2\text{ smod } d)}.
 \end{equation}
 Note that this is the same formula as in \eqref{symmetric_modulo} up to transformation $j\xrightarrow{}j+1$. Therefore, just as in \eqref{polynomial_eigen_odd}, we have
 \begin{equation}
    \prod_{k=0}^{d-1}w_{y,k}=1.
\end{equation}
Combined with \eqref{eq:eigB} it shows that $(B_y)^d=\mathbbm{1} $ for even $d$.

Note that for any dimension $W_y$ happens to be a unitary operator as its eigenvalues given in \eqref{symmetric_modulo} and \eqref{symmetric_modulo2} are of magnitude one. Thus we can conclude that $B_y=XW_y$ is also unitary. Therefore $B_y$ is a proper observable. 
%

\section{Self-testing using shifted Bell operator $\mathcal{F}_3$}\label{app:selftesting}
Here we show that the family of Bell inequalities that we introduced in Section \ref{sec:2fam} is capable of self-testing maximally entangled state and a set of optimal observables for local dimension $d=3$ up to transposition map. We begin stating Theorem \ref{thm:selftesting}) in more detailed way.
\begin{Theorem}
Assume that the state $\ket{\psi}\in\mathcal{H}_A\otimes\mathcal{H}_B$ and observables $A_x$
and $B_y$ attain the maximal quantum value of Bell operator $\mathcal{F_3}$ defined in \eqref{eq:bell_general_expr}. Then, there exist local unitary operations $U_A$ and $U_B$ such that $U_A:\mathcal{H}_A\to \mathbb{C}^3\otimes\mathcal{H}_{A''}$ and $U_B:\mathcal{H}_B\to \mathbb{C}^3\otimes\mathcal{H}_{B''}$ and 
\begin{equation}\label{StateST}
    U_A\otimes U_B\ket{\psi}=\ket{\psi_{+}}\otimes |\mathrm{aux}\rangle_{A''B''}
\end{equation}
and
\begin{eqnarray}
U_A\, A_{0}\, U_A^\dagger &=& Z\otimes\mathbbm{1},\\
U_A\, A_{1}\, U_A^\dagger &=& X\otimes P_{1} + X^2 \otimes P_{2},\\
    U_A\, A_{2}\, U_A^\dagger &=& \omega^2 X^{2} Z^{2} \otimes P_{1} + \omega X Z^{2} \otimes P_{2},
\end{eqnarray}
where $P_1+P_2=\mathbbm{1}_{A''}$, and
\begin{eqnarray}
U_B\, B_{y}\, U_B^\dagger &=& B'_y\otimes Q_{1} + (B'_y)^T \otimes Q_{2},
\end{eqnarray}
where $Q_1+Q_2=\mathbbm{1}_{B''}$. $B'_y$ are Bob's optimal observables, written explicitly in \eqref{eq:bobsalld3}.
\end{Theorem}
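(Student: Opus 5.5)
Our plan is to extract from the SOS decomposition \eqref{eq:SOS} a set of exact operator relations, and then use them to reconstruct the qutrit Weyl algebra on both sides, in the spirit of \cite{Kaniewski_2019}.

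\emph{Step 1: relations from the SOS.} Maximal violation $\langle\mathcal{F}_3\rangle=3$ together with \eqref{eq:SOS} forces $P_\ell\ket{\psi}=0$ for $\ell=0,1,2$. This gives
\begin{equation}
A_\ell\otimes \sum_y \omega^{y\ell}B_y\ket{\psi}=3\lambda_{0,\ell-1}\ket{\psi}.
\end{equation}
We restrict to the supports of the reduced states, so that we may assume full-rank marginals. The three operators $\tilde B_\ell\equiv\frac{1}{3\lambda_{0,\ell-1}}\sum_y\omega^{y\ell}B_y$ can be inverted by Fourier transform, which expresses each $B_y$ as a combination of the $\tilde B_\ell$. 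Since $A_\ell$ is unitary, the relation reads $\tilde B_\ell\ket{\psi}=A_\ell^\dagger\ket{\psi}$, where $A_\ell^\dagger$ is understood as acting on Alice's side.

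\emph{Step 2: algebraic constraints on Alice.} We then impose $B_y^3=\mathbbm{1}$ and $B_yB_y^\dagger=\mathbbm{1}$. Each such product of $\tilde B$'s is moved onto Alice by repeatedly using $\tilde B_\ell\ket{\psi}=A_\ell^\dagger\ket{\psi}$, with the order reversed. Full rank then turns the resulting vector identities into operator identities among $A_0,A_1,A_2$. We expect these to yield $A_0^3=A_1^3=\mathbbm{1}$ together with the twisted commutation $A_0A_1=\omega^{\pm1}A_1A_0$, and to fix $A_2$ as $\omega^{c}A_1^{2}A_0^{2}$ or the analogous product with conjugate phases. The key example is the relation $\tilde B_1\tilde B_0 \propto \tilde B_2$, which must hold up to the coefficients $\lambda_{0,k}$ in Bob's ideal operators. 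The sign ambiguity in $\omega^{\pm1}$ is exactly the transposition freedom. We split Alice's space via the projectors $P_1,P_2$ onto the eigenspaces of the central element $A_0A_1A_0^\dagger A_1^\dagger$, whose eigenvalue is $\omega$ or $\omega^2$. On the $P_2$ block we replace $X$ by $X^2$, which gives the stated forms of $U_AA_1U_A^\dagger$ and $U_AA_2U_A^\dagger$.

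\emph{Step 3: isometry and Bob's side.} A pair of order-3 unitaries with $AB=\omega BA$ generates a representation that is a direct sum of copies of the Weyl–Heisenberg representation, so there is a unitary $U_A$ mapping $A_0\mapsto Z\otimes\mathbbm{1}$ and $A_1\mapsto X\otimes P_1+X^2\otimes P_2$. For Bob, the same relations transported through the state show that $\tilde B_0$ and $\tilde B_1$ satisfy the analogous Weyl relations, with phases matching Alice's block. This gives $U_B$ and $Q_1,Q_2$, and Bob's $B_y$ are then rebuilt from the $\tilde B_\ell$ by Fourier inversion, producing $B'_y$ or $(B'_y)^T$. Finally, $A_0\otimes\tilde B_0\ket{\psi}=\ket{\psi}$ and $A_1\otimes\tilde B_1\ket{\psi}=\ket{\psi}$ stabilize a state that must be $\ket{\psi_+}\otimes\ket{\mathrm{aux}}$, because $\ket{\psi_+}$ is the unique common eigenvector of $Z\otimes Z^2$ and $X\otimes X$ in each block.

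\emph{Main obstacle.} The hardest step is Step 2, deriving the commutation relation $A_0A_1=\omega^{\pm1}A_1A_0$ from the unitarity and cube constraints on the $B_y$. The difficulty is that Bob's operators are nontrivial combinations with coefficients $\lambda_{y,k}$, so the expansion of $B_y^3=\mathbbm{1}$ produces many cross terms. Our first attempt would be to expand $B_yB_y^\dagger=\mathbbm{1}$ in the Fourier basis and sum over $y$ with suitable phases, isolating the equations $\sum_y\omega^{my}(\cdot)$. These should produce quadratic identities such as $\tilde B_\ell\tilde B_{\ell'}^\dagger\ket{\psi}\propto\tilde B_{\ell-\ell'}\ket{\psi}$, which transfer directly to Alice. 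If that does not suffice, we would fall back on the explicit $d=3$ parametrization argument of \cite{Kaniewski_2019}.
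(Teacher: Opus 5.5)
\textbf{There is a genuine gap in Step~2.} Your overall architecture is sound: SOS, then operator relations on Alice, then the $d=3$ structure result of \cite{Kaniewski_2019}, then transport to Bob, then fixing the state. But Step~2, which carries the proof, is only announced, and the mechanism you describe cannot deliver it.

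After Fourier inversion, $P_\ell|\psi\rangle=0$ for all $\ell$ is equivalent to $B_y|\psi\rangle=N_y|\psi\rangle$, with $N_y=\sum_\ell\omega^{-\ell y}\lambda_{0,\ell-1}A_\ell^\dagger$ acting on Alice. This tells you how $B_y$ acts on $|\psi\rangle$. It gives no independent information on $B_y^\dagger|\psi\rangle$ or $\tilde B_\ell^\dagger|\psi\rangle$. Your only handle is $B_y^\dagger=B_y^2$, so imposing $B_y^3=\mathbbm{1}$ and $B_yB_y^\dagger=\mathbbm{1}$ both collapse to the same cubic constraint $N_y^3=\mathbbm{1}$. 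Summing over $y$ with phases only Fourier-splits that constraint.

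The quadratic identities you hope to transfer are therefore not obtainable this way, since $\tilde B_{\ell'}$ is not yet known to be unitary. Moreover, the ones you single out are false for the target strategy. There $\tilde B_m=X^mZ^{m-1}$, so $\tilde B_1\tilde B_0=XZ^2\not\propto\tilde B_2=X^2Z$, and $\tilde B_1\tilde B_0^\dagger=XZ\not\propto\tilde B_1$. The true relations are affine, e.g.\ $\tilde B_2\propto\tilde B_1\tilde B_0^\dagger\tilde B_1$.

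Consequently nothing in your argument establishes that $A_0A_1A_0^\dagger A_1^\dagger$ is central with spectrum in $\{\omega,\omega^2\}$, or fixes $A_2$. The fallback to \cite{Kaniewski_2019} does not help, because that result takes as input an anticommutator system you have not derived. Step~3 inherits the same gap: proving that the $\tilde B_\ell$ are order-three unitaries needs $\tilde B_\ell^\dagger|\psi\rangle=A_\ell|\psi\rangle$, which you do not have.

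\textbf{The missing fact is that $N_y$ is itself unitary}, i.e.\ $(N_y^\dagger\otimes B_y)|\psi\rangle=|\psi\rangle$. Your conditions are equivalent to the paper's $L_y^\dagger|\psi\rangle=0$. The paper obtains the complementary conditions $L_y|\psi\rangle=0$ from a second SOS grouped by Bob's inputs: $3\mathbbm{1}-\mathcal{F}_3=\frac14\sum_i(L_i^\dagger L_i+L_iL_i^\dagger)$, with $L_i=\mathbbm{1}-\frac13 M_i\otimes B_i$ and $M_i=3N_i^\dagger$ up to the phase convention on $A_2$.

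You could also get this inside your own framework. Note that $\sum_yN_yN_y^\dagger=3\,\mathbbm{1}$ and $\langle\psi|N_y^\dagger\otimes B_y|\psi\rangle=\|B_y|\psi\rangle\|^2=1$. Hence $\sum_y\|(\mathbbm{1}-N_y^\dagger\otimes B_y)|\psi\rangle\|^2=3-6+3=0$.

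With $N_y$ unitary and $N_y^3=\mathbbm{1}$, you get the quadratic relation $N_y^2=N_y^\dagger$, which is the paper's $M_i^2=3M_i^\dagger$. Its Fourier combination over $i$ gives $\{A_1,A_2\}=-A_0^\dagger$, $\{A_0,A_2\}=-A_1^\dagger$, $\{A_0,A_1\}=-A_2^\dagger$ (in the appendix's phase convention). Solving that system is what produces the $P_1/P_2$ splitting. The same unitarity gives $\tilde B_\ell^\dagger|\psi\rangle=A_\ell|\psi\rangle$, and hence the analogous relations for Bob.

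\textbf{A smaller point in the final step.} In the cross blocks $P_1\otimes Q_2$ and $P_2\otimes Q_1$, the stabilizers are $Z\otimes Z^2$ together with $X\otimes X^2$ (or $X^2\otimes X$). These have no common $+1$ eigenvector, and it is this, not uniqueness ``in each block'', that confines $|\psi\rangle$ to the diagonal blocks.
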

\begin{proof} Coefficients $\lambda_{y,k}$ given in \eqref{eq:optcoeffs} for $d=3$ give raise to the shifted Bell operator defined in \eqref{eq:bell_general_expr} of the following form
\begin{align}
\mathcal{F}_3&= \frac{1}{6}\bigg[(2A_0 + 2A_1 - A_2) \otimes B_0 \nonumber \\
  &\quad +  (2A_0 + 2\omega A_1 - \omega^2 A_2) \otimes B_1 \nonumber \\
  &\quad +  (2A_0 + 2\omega^2 A_1 - \omega A_2) \otimes B_2\bigg]+\text{h.c.,}
\end{align}
where $A_x$ and $B_y$ are at the moment arbitrary observables of Alice and Bob. 
One checks directly that the following sum-of-squares (SOS) decomposition~\cite{doherty2008quantum}, which is different from Eq.~\eqref{eq:P}, also 
holds true
\begin{equation}
    3\mathbbm{1} - \mathcal{F}_3=\frac{1}{4}\sum_{i=1}^{3}(L_i^{\dagger}L_i+L_iL_i^{\dagger}),
\end{equation}
where 
\begin{equation}
    L_1=\mathbbm{1} - \frac{1}{3}(2A_0 + 2A_1 - A_2) \otimes B_0,\qquad L_2=\mathbbm{1} - \frac{1}{3}(2A_0 + 2\omega A_1 - \omega^2 A_2) \otimes B_1,\qquad L_3=\mathbbm{1} - \frac{1}{3}(2A_0 + 2\omega^2 A_1 - \omega A_2) \otimes B_2.
\end{equation}

Let us then assume that $\ket{\psi}$, $A_x$ and $B_y$ give rise to the maximal quantum value of the inequality. From the above SOS decomposition we 
immediately obtain that $L_i|\psi\rangle=L_i^{\dagger}|\psi\rangle=0$ for every $i=1,2,3$. We then consider a particular pair of equations $L_1\ket{\psi}=0$
and $L_1^{\dagger}\ket{\psi}=0$, which by virtue of the explicit form
of $L_1$ can be rewritten as
\begin{eqnarray}\label{eq1}
    \frac{1}{3}(2A_0 + 2A_1 - A_2) \otimes B_0 |\psi\rangle=|\psi\rangle,\qquad
    \frac{1}{3}(2A_0 + 2A_1 - A_2)^{\dagger} \otimes B_0^{\dagger} |\psi\rangle=|\psi\rangle,
\end{eqnarray}
We then act with $L_1$ on the first equation above which leads us to
\begin{eqnarray}
    \frac{1}{9}(2A_0 + 2A_1 - A_2)^2 \otimes B^2_0 |\psi\rangle=|\psi\rangle,
\end{eqnarray}
Using then the fact that $B_y^{2}=B_y^{\dagger}$ (for $d=3$) and that $B_y$ are unitary, we can rewrite the above as 
\begin{eqnarray}
    \frac{1}{9}[(2A_0 + 2A_1 - A_2)^2\otimes \mathbbm{1} ] |\psi\rangle=(\mathbbm{1}\otimes B_0)|\psi\rangle.
\end{eqnarray}
We can finally use the second equation in \eqref{eq1} to get
\begin{eqnarray}
    [(2A_0 + 2A_1 - A_2)^2\otimes \mathbbm{1} ] |\psi\rangle=3[(2A_0 + 2A_1 - A_2)^{\dagger}\otimes \mathbbm{1} ]|\psi\rangle,
\end{eqnarray}
which because $\ket{\psi}$ is locally full rank, implies 
the following operator equation
\begin{equation}
    (2A_0 + 2A_1 - A_2)^2=3(2A_0 + 2A_1 - A_2)^{\dagger}. 
\end{equation}
This equation can be further rewritten as
\begin{equation}
    2\{A_0, A_1\} - \{A_0, A_2\} - \{A_1, A_2\} = A_0^\dagger + A_1^\dagger - 2A_2^\dagger
\end{equation}
Applying the same steps to the other pairs of equations $L_i|\psi\rangle=L_i^{\dagger}|\psi\rangle=0$, we obtain a system
of equations
\begin{align}
    2\{A_0, A_1\} - \{A_0, A_2\} - \{A_1, A_2\} &= A_0^\dagger + A_1^\dagger - 2A_2^\dagger, \label{eq:1} \\
    2\omega \{A_0, A_1\} - \omega^2 \{A_0, A_2\} -  \{A_1, A_2\} &= A_0^\dagger + \omega^2 A_1^\dagger - 2\omega A_2^\dagger, \label{eq:2} \\
    2\omega^2 \{A_0, A_1\} - \omega \{A_0, A_2\} - \{A_1, A_2\} &= A_0^\dagger + \omega A_1^\dagger - 2\omega^2 A_2^\dagger, \label{eq:3}
\end{align}
which can be significantly simplified to the form 
\begin{align}
\label{ai_relations}
        \{A_1, A_2\} = -A_0^\dagger \\
        \{A_0, A_2\} = -A_1^\dagger \\
        \{A_0, A_1\} = -A_2^\dagger.
\end{align}

Interestingly, the above system was already solved in \cite{Kaniewski_2019}
and the solution is that there exists a unitary operation $U_A:\mathcal{H}_A\to\mathbbm{C}^3\otimes\mathcal{H}_{A''}$ such that
\begin{equation}\label{Aobs}
U_A\, A_{x}\, U_A^{\dagger}=A_{x}'\otimes P_1+ (A_{x}')^T\otimes P_2,    
\end{equation}
where $P_1+P_2=\mathbbm{1}_{A''}$ and $A_{x}'$ are the reference observables given by 
\begin{equation}
    A_0'=Z,\qquad A_1'=X,\qquad A_2'=\omega^2X^2Z^2.
\end{equation}
Notice that $(A_{0}')^T=A_0'$.
%
%
%

Having determined the form of Alice's observables compatible with the maximal violation of the Bell inequality, let us now move on to characterizing Bob's observables. Denoting 
\begin{equation}
\label{c_i's}    C_0=\frac{1}{2}(B_0+B_1+B_2),\;C_1=\frac{1}{2}(B_0+\omega B_1 + \omega^2 B_2),\;C_2=-(B_0+\omega B_1 + \omega^2 B_2),
\end{equation}
one can show from the equations $L_i|\psi\rangle=0$ that 
\begin{equation}
    A_i\otimes C_i|\psi\rangle=|\psi\rangle\qquad (i=0,1,2).
\end{equation}
It is then not difficult to observe that $C^{\dagger}_iC_i=\mathbbm{1}$ 
and $C^3_i=\mathbbm{1}$, which also implies that $C_i^2=C_i^{\dagger}$ 
for every $i=0,1,2$. Then, using the relations \eqref{ai_relations}
obeyed by the $A_i$ observables, one can prove that
\begin{align}
\label{ci_relations}
        \{C_1, C_2\} = -C_0^\dagger \\
        \{C_0, C_2\} = -C_1^\dagger \\
        \{C_0, C_1\} = -C_2^\dagger.
\end{align}
As before, one infers the existence of a unitary operation $U_B$ such that
\begin{eqnarray}
    U_B\, C_{1}\, U_B^\dagger &=& Z^2\otimes\mathbbm{1}\nonumber\\
    U_B\, C_{2}\, U_B^\dagger &=& X\otimes Q_1+X^2\otimes Q_2\nonumber\\
    U_B\, C_{3}\, U_B^\dagger &=& \omega X^2Z\otimes Q_1+\omega^2XZ\otimes Q_2
\end{eqnarray}
with $Q_1+Q_2=\mathbbm{1}_{B}$. This implies that
\begin{equation}\label{Bobs}
    U_B\, B_{y}\, U_B^\dagger = B_y'\otimes Q_{1} + (B_y')^T\otimes Q_{2},
\end{equation}
where
\begin{align}
    B_0'&=\frac{1}{3}\left(2Z^2+2X-\omega X^2Z\right),\\
    B_1'&=\frac{1}{3}\left(2Z^2+2\omega^2X-\omega^2X^2Z\right),\\
    B_2'&=\frac{1}{3}\left(2Z^2+2\omega X- X^2Z\right),
\end{align}

Let us finally prove Eq. (\ref{StateST}). To this end, we plug formulas for
the observables \eqref{Aobs} and \eqref{Bobs} into the Bell operator $\mathcal{F}_3$, which gives us
\begin{equation}
    U_A\otimes U_B\,\mathcal{F}_3\, U^{\dagger}_A\otimes U^{\dagger}_B=\sum_{x,y}W_{xy}\otimes P_x\otimes Q_y,
\end{equation}
where $W_{xy}$ are two-qutrit matrices composed of the reference 
operators $A_x'$ and $B_y'$ and their transpositions. Each $W_{xy}$ can easily be diagonalized and it turns out that only $W_{11}$ and $W_{22}$ contain $\mu=3$ as their eigenvalue, which is the maximal quantum value of $\mathcal{F}_3$. In both cases corresponding eigenspace is 1-dimensional and it is spanned by $\ket{\psi_{+}}=\frac{1}{\sqrt{3}}(\ket{00}+\ket{11}+\ket{22})$.
\end{proof}
\end{document}